\numberwithin{equation}{section}
\numberwithin{figure}{section}
\theoremstyle{plain}
\newtheorem{thm}{\protect\theoremname}[section]
  \theoremstyle{definition}
  \newtheorem{defn}[thm]{\protect\definitionname}
  \theoremstyle{plain}
  \newtheorem{cor}[thm]{\protect\corollaryname}
  \theoremstyle{remark}
  \newtheorem{rem}[thm]{\protect\remarkname}
  \theoremstyle{plain}
  \newtheorem{lem}[thm]{\protect\lemmaname}
  \theoremstyle{plain}
  \newtheorem{prop}[thm]{\protect\propositionname}
  \providecommand{\corollaryname}{Corollary}
  \providecommand{\definitionname}{Definition}
  \providecommand{\lemmaname}{Lemma}
  \providecommand{\propositionname}{Proposition}
  \providecommand{\remarkname}{Remark}
\providecommand{\theoremname}{Theorem}
\begin{document}

\title{Uniform Distribution of Eigenstates on a Torus with Two Point Scatterers}

\author{Nadav Yesha}

\date{\today}
\begin{abstract}
We study the Laplacian perturbed by two delta potentials on a two-dimensional
flat torus. There are two types of eigenfunctions for this operator:
old, or unperturbed eigenfunctions which are eigenfunctions of the
standard Laplacian, and new, perturbed eigenfunctions which are affected
by the scatterers. We prove that along a density one sequence, the
new eigenfunctions are uniformly distributed in configuration space,
provided that the difference of the scattering points is Diophantine. 
\end{abstract}

\address{School of Mathematics, University of Bristol, Bristol BS8 1TW, UK}

\email{nadav.yesha@bristol.ac.uk}



\maketitle

\section{Introduction}

\markright{UNIFORM DISTRIBUTION ON A TORUS WITH TWO POINT SCATTERERS}

\subsection{Toral Point Scatterers}

In the field of Quantum Chaos, one of the fundamental questions concerns
the quantum ergodicity of a quantum system, i.e., equidistribution
of almost all eigenstates of the system in the high energy limit.
A key result is Shnirelman's quantum ergodicity theorem \cite{Shnirelman,CdV2,Zelditch},
which asserts that a quantum system whose classical counterpart has
ergodic dynamics is quantum ergodic. On the other hand, there are
quantum systems whose classical counterpart has integrable dynamics,
for which the eigenstates tend to localize (``scar'') in phase space.

A point scatterer on a flat torus is a popular model to study the
transition between integrable and chaotic systems. Formally, it is
defined as a rank one perturbation of the Laplacian, namely
\begin{equation}
-\Delta+\alpha \left<\delta_{x_{0}}, \cdot\right>\delta_{x_{0}} \label{eq:OneScatterer}
\end{equation}
where $\alpha \in \mathbb{R}$ is a coupling parameter, and $x_{0}$ is the scattering
point. It is an intermediate model, in the sense that the delta potential
at $x_{0}$ does not change the (integrable) classical dynamics of
the system except for a measure zero set of trajectories, whereas
it has a chaotic influence on the behaviour of the quantum system.

A standard way to rigorously define a point scatterer is via the theory
of self-adjoint extensions, as described in depth in \cite{CdV1}.
One defines the operator (\ref{eq:OneScatterer}) as a self-adjoint
extension of the Laplacian vanishing near the point $x_{0}$ (there
are non-trivial self-adjoint extensions only in dimensions $d\le3$);
such extensions are parametrized by a phase $\phi\in(-\pi,\pi]$,
where $\phi=\pi$ corresponds to $\alpha=0$ in (\ref{eq:OneScatterer}),
i.e., to the standard Laplacian. Consider the other, non-trivial extensions.
Their eigenfunctions can be split into eigenfunctions of the standard
Laplacian, referred to as the old, or unperturbed eigenfunctions,
as well as new, or perturbed eigenfunctions which are affected by
the presence of the scatterer, and therefore are the main object of
study. 

The semiclassical limits for the perturbed eigenfunctions of a point
scatterer on flat tori have been studied extensively in recent
years (see \cite{Ueberschaer} for a survey on some of the results).
Rudnick and Ueberschär proved uniform distribution in configuration
space of the perturbed eigenfunctions for a point scatterer on two-dimensional
flat tori \cite{RudnickUeberschaer}. This was also proved for three-dimensional
flat tori \cite{Yesha}, both on the standard square torus and on
irrational tori with a Diophantine condition on the side lengths,
where in the former case of the standard torus \emph{all} of the perturbed
eigenfunctions equidistribute in configuration space. As for quantum
ergodicity in full phase space, it was proved both on the standard
two-dimensional flat torus \cite{KurlbergUeberschaer} and on the
standard three-dimensional torus \cite{Yesha2}. 

Scarring behavior has also been studied in several settings. Kurlberg
and Ueberschär showed \cite{KurlbergUeberschaer2} that for an irrational
two-dimensional torus (also known as the ``Šeba billiard'' as introduced
in \cite{Seba}) with a Diophantine condition on the side lengths,
quantum ergodicity does not hold in full phase space; in fact, almost
all new eigenfunctions strongly localize in momentum space. More recently,
Kurlberg and Rosenzweig studied scarring behaviour on standard tori
both in two and three dimensions \cite{KurlbergRosenzweig}.

\subsection{Two Point Scatterers}

Recently, Ueberschär raised the natural question of the behavior of
a system with several scatterers \cite{Ueberschaer2,Ueberschaer3,Ueberschaer4}.
For a standard torus with $n$ i.i.d uniform random scatterers, he
showed \cite{Ueberschaer3} that uniform distribution in configuration
space of almost all of the perturbed eigenfunctions holds with probability
$\gg1/n$. Our goal in this paper is to prove a deterministic result
for two point scatterers on the torus.

Interestingly, our techniques do not generalize to the case of three
or more scatterers, for which the symmetries that we exploit fail
to hold. Indeed, it seems that even one additional (third) scatterer
significantly complicates the nature of the system, so deterministic
results for three or more scatterers require additional arguments.
As an example, note that even with the presence of only a few scatterers, some unique phenomena occur, such as Laplace eigenspaces
of dimensions smaller than the number of scatterers (this can occur
for all eigenspaces, e.g. for irrational tori). Additional arguments are also required in order to extend our equidistribution results to phase space (as in \cite{KurlbergUeberschaer}) or to study scarring behaviour for systems with multiple scatterers. 

For the clarity of the paper, we will not work in the most general
setting. Here we consider the two-dimensional standard flat torus
$\mathbb{T}^{2}=\mathbb{R}^{2}/2\pi\mathbb{Z}^{2}$ with two scatterers
at the points $x_{1},x_{2}\in\mathbb{T}^{2}$, whose normalized difference
$\left(x_{2}-x_{1}\right)/\pi$ is Diophantine. Our results can be
easily generalized to non-square tori with a Diophantine condition
on the difference of the scatterers -- see Theorem \ref{thm:NonSquareTorus}
below. In addition, using the methods of \cite{Yesha}, Theorem \ref{thm:MainResultGeneral}
can be extended to the standard three-dimensional torus, and also to irrational tori with
the same Diophantine condition on the side lengths as in \cite{Yesha}.

To give a more detailed account of our results, recall the definition
of a Diophantine vector:
\begin{defn}
A vector $\left(\alpha_{1},\alpha_{2}\right)\in\mathbb{R}^{2}$ is
said to be Diophantine of type $\kappa$, if there exists a constant
$C>0$ such that
\[
\max_{j=1,2}\left|\alpha_{j}-\frac{p_{j}}{q}\right|>\frac{C}{q^{\kappa}}
\]
for all $p_{1},p_{2},q\in\mathbb{Z},$ $q>0$. By Dirichlet's theorem,
the smallest possible value for $\kappa$ is $3/2$.
\end{defn}
Let $x_{0}=x_{2}-x_{1}$, and assume that the vector $x_{0}/\pi$
is Diophantine. Note that by Khinchin's theorem on Diophantine approximations,
our assumption holds for almost all pairs $x_{1},x_{2}$.

Consider the Laplacian perturbed by two delta potentials at $x_{1},x_{2}$, which is formally defined as a symmetric rank two perturbation of the Laplacian, namely
\begin{equation}
-\Delta + \sum_{i,j=1}^{2}h_{ji}\left< \delta_{x_{i}}, \cdot \right>\delta_{x_{j}} \label{eq:TwoScatterers}
\end{equation}
where $H = \left\{h_{ij}\right\}_{i,j=1}^2$ is a Hermitian matrix, and $x_1, x_2 \in \mathbb{T}^2$ are the scattering points (in particular, the case where we assume no non-local interaction between the scatterers corresponds to perturbations with a diagonal $ H $, i.e.,
\[  -\Delta + \alpha\left< \delta_{x_{1}}, \cdot \right>\delta_{x_{1}} + 
\beta\left< \delta_{x_{2}}, \cdot \right>\delta_{x_{2}}
\] where $ \alpha,\beta\in \mathbb{R} $ are coupling constants).
As in the case of a single scatterer, the formal operators  (\ref{eq:TwoScatterers}) can be defined
rigorously using the theory of self-adjoint extensions of the standard
Laplacian vanishing at $x_{1},x_{2}$. The self-adjoint extensions
are parametrized by the unitary group $U\left(2\right)$. As we will
see, the standard Laplacian is retrieved by the extension corresponding
to the matrix $U=-I$, and the eigenvalues of the other extensions
can be again divided into old or unperturbed eigenvalues, i.e., eigenvalues
of the standard Laplacian, as well as a set of new, perturbed eigenvalues,
which we denote by $\Lambda=\Lambda_{U}$.

To establish a link between the old and the new eigenvalues of a self-adjoint
extension $-\Delta_{U}$, we define the ``weak interlacing'' property:
\begin{defn}
We say that a set $A\subseteq\mathbb{R}$ weakly interlaces with a
set $B\subseteq\mathbb{R}$, if $ A \cap B = \emptyset $, and there exists a constant $C>0$ such
that between any two elements of $A$ there are at most $C$ elements
of $B$, and vice versa. 
\end{defn}
It is a general fact \cite{BirmanSolomjak} that for $n$ point scatterers
(which are similarly defined via self-adjoint extensions), the difference
between the spectral counting function of $-\Delta_{U}$ (with multiplicities)
and the spectral counting function of the standard Laplacian is uniformly
bounded by $n$. In Appendix \ref{sec:Appendix-A}, we will see that
for each $0\ne\lambda\in\sigma\left(-\Delta\right),$ the dimension
of the corresponding eigenspace of $-\Delta_{U}$ is equal to the
dimension of the Laplace eigenspace minus $\mbox{rank}\left(I+U\right)$.
It follows that the set $\Lambda$ of new eigenvalues weakly interlaces
with the Laplace eigenvalues.

\subsection{Statement of the Main Result}

We now state our main result. Let $\Lambda_{0}$ be any set of real
numbers which weakly interlaces with the Laplace eigenvalues. For
$\lambda\in\Lambda_{0}$ and $\left(d_{1},d_{2}\right)\ne\left(0,0\right)$,
let 
\[
G_{\lambda}\left(x\right)=G_{\lambda}\left(x;d_{1},d_{2}\right)=d_{1}G_{\lambda}\left(x,x_{1}\right)+d_{2}G_{\lambda}\left(x,x_{2}\right)
\]
be any non-zero superposition of the Green's functions 
\[
G_{\lambda}\left(x,x_{j}\right)=\left(\Delta+\lambda\right)^{-1}\delta_{x_{j}}\left(x\right)\hspace{1em}j=1,2,
\]
and let $g_{\lambda}=G_{\lambda}/\left\Vert G_{\lambda}\right\Vert _{2}$.
\begin{thm}
\label{thm:MainResultGeneral}Let $x_{0}=x_{2}-x_{1},$ and assume
that $x_{0}/\pi$ is Diophantine. For any $ \epsilon>0 $ and for any set $\Lambda_{0}\subseteq\mathbb{R}$
which weakly interlaces with the Laplace eigenvalues, there exists
a subset $\Lambda_{\infty}\subseteq\Lambda_{0}$ of density one so
that for all observables $a\in C^{\infty}\left(\mathbb{T}^{2}\right)$,
\[
\int_{\mathbb{T}^{2}}a\left(x\right)\left|g_{\lambda}\left(x;d_{1},d_{2}\right)\right|^{2}\,\mbox{d}x=\frac{1}{4\pi^{2}}\int_{\mathbb{T}^{2}}a\left(x\right)\,\mbox{d}x+O_{a,\epsilon}\left(\lambda^{-1/8+\epsilon}\right)
\]
as $\lambda\to\infty$ along $\Lambda_{\infty}$.
\end{thm}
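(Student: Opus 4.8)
The plan is to pass to Fourier series on $\mathbb{T}^2$. With $x_0 = x_2 - x_1$ one has $G_\lambda(x,x_j) = \frac{1}{4\pi^2}\sum_{\xi\in\mathbb{Z}^2} e^{i\xi\cdot(x-x_j)}/(\lambda-|\xi|^2)$, so $G_\lambda(\cdot;d_1,d_2)$ has Fourier coefficients $b_\xi = (d_1 e^{-i\xi\cdot x_1} + d_2 e^{-i\xi\cdot x_2})/(\lambda-|\xi|^2)$. Expanding $a = \sum_\eta \hat a(\eta)e^{i\eta\cdot x}$ and using orthogonality,
\[
\int_{\mathbb{T}^2} a\,|g_\lambda|^2 = \frac{\sum_\eta \hat a(\eta)\,S_\eta(\lambda)}{S_0(\lambda)},\qquad S_\eta(\lambda) = \sum_\xi b_\xi\,\overline{b_{\xi+\eta}},
\]
and since the $\eta=0$ term reproduces $\frac{1}{4\pi^2}\int a$, it suffices to exhibit a density-one $\Lambda_\infty\subseteq\Lambda_0$ along which $|S_\eta(\lambda)| \ll \lambda^{-1/8+\epsilon}\,|\eta|^{O(1)}\,S_0(\lambda)$ for every $\eta\neq 0$; the rapid decay of $\hat a$ then gives the stated error and lets us restrict to $|\eta|$ below a fixed small power of $\lambda$.

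For the denominator, $|b_\xi|^2 = \big(|d_1|^2+|d_2|^2+2\,\mathrm{Re}(d_1\bar d_2 e^{i\xi\cdot x_0})\big)/(\lambda-|\xi|^2)^2$, so $S_0 = (|d_1|^2+|d_2|^2)\sum_\xi(\lambda-|\xi|^2)^{-2} + 2\,\mathrm{Re}(d_1\bar d_2)\sum_\xi\cos(\xi\cdot x_0)(\lambda-|\xi|^2)^{-2}$. The second sum is governed by the Weyl sums $\sum_{|\xi|^2=m}e^{i\xi\cdot x_0}$, which the Diophantine hypothesis on the single vector $x_0/\pi$ forces to be $o(r_2(m))$ for density-one $m$ (equidistribution of lattice points on circles against $\xi\mapsto\xi\cdot x_0$), where $r_2$ counts representations as a sum of two squares; hence $S_0\gg(|d_1|^2+|d_2|^2)\sum_\xi(\lambda-|\xi|^2)^{-2}$, which removes the apparent cancellation when $|d_1|=|d_2|$. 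Since $\sum_\xi(\lambda-|\xi|^2)^{-2}\ge r_2(n^{*})/(\lambda-n^{*})^2$, with $n^{*}$ the integer nearest $\lambda$ that is a sum of two squares, and weak interlacing keeps $\lambda$ within $O(\log\lambda)$ of $\sigma(-\Delta)$ along density one, $S_0$ is bounded below by an inverse power of $\log\lambda$, and is much larger when $\lambda$ is close to a sum of two squares.

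For the numerator I would split $S_\eta$ according to which of the denominators $\lambda-|\xi|^2$, $\lambda-|\xi+\eta|^2$ is small. The most delicate terms are the double resonances $|\xi|^2=|\xi+\eta|^2=m$ with $m$ near $\lambda$, each of size $\asymp(\lambda-m)^{-2}$ and a priori comparable to $S_0$. Here the two-point structure is used: such $\xi$ lie on the single line $2\xi\cdot\eta=-|\eta|^2$, so for each $\eta$ the set of values $m=|\xi|^2$ arising is very sparse ($O(\sqrt X)$ below $X$), and summing over the relevant $\eta$ keeps the total at $|\eta|^{O(1)}\sqrt X$. Combining this sparsity with weak interlacing — which turns a set of density zero in $\mathbb{R}$ into one of density zero in $\Lambda_0$ — one deletes from $\Lambda_\infty$ all $\lambda$ lying too close to such an $m$, and likewise all $\lambda$ whose nearest $n^{*}$ has atypically large $r_2$, two representations differing by a short vector, or a representation nearly perpendicular to some short $\eta$. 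For the surviving $\lambda$ the single-resonance and bulk contributions to $S_\eta$ are estimated via the telescoping identity $\frac{1}{(\lambda-|\xi|^2)(\lambda-|\xi+\eta|^2)}=\nu^{-1}\big(\tfrac{1}{\lambda-|\xi+\eta|^2}-\tfrac{1}{\lambda-|\xi|^2}\big)$, $\nu=|\xi+\eta|^2-|\xi|^2=2\xi\cdot\eta+|\eta|^2$, using that $\nu$ spreads over an interval of length $\asymp\sqrt\lambda\,|\eta|$ as $\xi$ runs over a circle, together with the Diophantine bounds for the phases $e^{\pm i\xi\cdot x_0}$ in the cross terms. The gain in the resonant terms improves as one forbids $\lambda$ from approaching these arithmetic configurations by a larger margin, while the gain in the bulk degrades; balancing the margin against the density-zero constraint — and, in the cross terms, against the strength of the Weyl-sum bound available from a type-$3/2$ Diophantine vector — is what produces the exponent $1/8$.

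The crux is this density-one reduction. Unlike the single-scatterer problem, $\Lambda_0$ carries no quantization condition and its points need not lie near $\sigma(-\Delta)$, so the only structural input is weak interlacing; one must show that it nonetheless suffices to excise a density-one subsequence avoiding every dangerous resonant configuration simultaneously, uniformly over the polynomially many frequencies $\eta$. This requires quantifying precisely how a set of density zero in $\mathbb{R}$ can meet a weakly interlacing sequence, and establishing the requisite sparsity and equidistribution facts about lattice points on circles. The Diophantine hypothesis on the single difference vector $x_0$ is the second essential ingredient — it is tied to the two-point structure, with no analogue for three or more scatterers — entering both the lower bound for $S_0$ and every estimate involving $\sum_{|\xi|^2=m}e^{i\xi\cdot x_0}$.
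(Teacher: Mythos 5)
Your skeleton is the right one --- expand in Fourier series, recognize the $\eta=0$ term as the main term, bound $S_\eta/S_0$ for $\eta\ne 0$ along a density-one subsequence obtained by excising double resonances, and use rapid decay of $\hat a$ to restrict to $|\eta|\le\lambda^{\epsilon}$ --- and you correctly locate the two essential inputs (the Diophantine condition for the normalization, the resonance sparsity for the off-diagonal terms). But both central steps are asserted rather than proved, and one of them rests on a claim that is both unjustified and too strong. For the lower bound on $S_0$ you invoke ``equidistribution of lattice points on circles against $\xi\mapsto\xi\cdot x_0$'' to get $\sum_{|\xi|^2=m}e^{i\xi\cdot x_0}=o\left(r_2(m)\right)$. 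This is not a standard equidistribution statement (the test function depends on $\xi$ itself, not on its angle), no proof is indicated, and since $r_2(m)$ is often bounded, $o\left(r_2(m)\right)$ would force the sum to essentially vanish --- which is more than can be true and far more than is needed. What is actually required is a quantitative statement that the phases $e^{i\xi\cdot x_0}$ over $|\xi|^2=n_\lambda$ cannot all lie within $\lambda^{-\epsilon}$ of the single point $-d_1/d_2$; the paper gets this from just \emph{two} lattice points, $(\xi_1,\xi_2)$ and its reflection $(-\xi_1,\xi_2)$, whose phases differ by $2\left|\sin(\xi_1\alpha_1)\right|\gg\lambda^{-\epsilon}$ along a density-one subsequence by the Diophantine hypothesis (Lemmas \ref{lem:NandLambdaClose} and \ref{lem:BigSin}). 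Without some such quantitative input your normalization $S_0$ is uncontrolled and the whole relative error estimate collapses.

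For the numerator, your resonance/bulk decomposition with the telescoping identity is precisely the part where all the work would lie, and it is left unexecuted; moreover your account of where $1/8$ comes from (balancing resonant margins against the Diophantine type) does not match what actually happens. The clean route is to truncate $G_\lambda$ to the window $\left||\xi|^2-\lambda\right|\le L$ with $L=\lambda^{\delta}$, $\delta=1/4-\epsilon$: the truncation error in $L^2$ is $\ll\lambda^{O(\epsilon)}L^{-1/2}$ (this, divided through by the $S_0$ lower bound, is the sole source of the exponent $1/8$), and the constraint $\delta<1/4$ is forced by the density-one counting $\#\{\lambda\le X\}\ll X^{1/2+2\delta}/|\zeta|$ for the excised set. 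After truncation, along $\Lambda_\zeta$ there is \emph{no} pair $\xi,\xi+\zeta$ with both norms in the window (Lemma \ref{lem:xi_xizeta_distance}), so the off-diagonal sum is identically zero --- no single-resonance or bulk estimate, no telescoping, and no use of the Diophantine condition is needed there. If you prefer not to truncate, the contribution of $\left||\xi|^2-\lambda\right|>L$ to $S_\eta$ is disposed of in one line by Cauchy--Schwarz against $S_0$, which is the same bound in disguise. Finally, your parenthetical that weak interlacing keeps $\lambda$ within $O(\log\lambda)$ of $\sigma(-\Delta)$ is unsupported: weak interlacing controls counting, not distances, and the proximity statement $\lambda-n_\lambda\ll\lambda^{\epsilon}$ holds only along a density-one subsequence, via the gap result for sums of two squares.
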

Let $-I\ne U\in U\left(2\right)$, and let $-\Delta_{U}$ the corresponding
self-adjoint extension. For a new eigenvalue $\lambda\in\Lambda$,
the corresponding eigenfunction is a superposition of the Green's
functions $G_{\lambda}\left(x,x_{j}\right)$. Thus, given an orthonormal
basis $\left\{ \varphi_{k}\right\} $ for the subspace of the perturbed
eigenfunctions, it follows from Theorem \ref{thm:MainResultGeneral}
and from the weak interlacing of $\Lambda$ with the eigenvalues of
$-\Delta$ that $\left\{ \varphi_{k}\right\} $ is uniformly distributed
in configuration space along a density one subsequence:
\begin{cor}
Let $x_{0}=x_{2}-x_{1},$ and assume that $x_{0}/\pi$ is Diophantine. For any $-I\ne U\in U\left(2\right)$, let $\left\{ \varphi_{k}\right\} $
be an orthonormal basis for the subspace of the perturbed eigenfunctions
of $-\Delta_{U}$ with eigenvalues $\left\{ \lambda_{k}\right\} $.
For any $ \epsilon>0 $, there exists a density one sequence $\left\{ \lambda_{k_j}\right\} $ so that
for all observables $a\in C^{\infty}\left(\mathbb{T}^{2}\right)$,
\[
\int_{\mathbb{T}^{2}}a\left(x\right)\left|\varphi_{k_{j}}\left(x\right)\right|^{2}\,\mbox{d}x=\frac{1}{4\pi^{2}}\int_{\mathbb{T}^{2}}a\left(x\right)\,\mbox{d}x+O_{a,\epsilon}\left(\lambda_{k_{j}}^{-1/8+\epsilon}\right)
\]
as $j\to\infty$.
\end{cor}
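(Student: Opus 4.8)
The plan is to deduce the corollary directly from Theorem \ref{thm:MainResultGeneral}, applied to the set $\Lambda_{0}=\Lambda=\Lambda_{U}$ of new eigenvalues of $-\Delta_{U}$. First I would check that $\Lambda$ is an admissible choice of $\Lambda_{0}$, i.e.\ that it weakly interlaces with $\sigma(-\Delta)$. This is exactly what is recorded in the excerpt: by the eigenspace dimension formula established in Appendix \ref{sec:Appendix-A} the dimension of the $-\Delta_{U}$-eigenspace at a nonzero Laplace eigenvalue drops by $\mathrm{rank}(I+U)\le 2$, while by the Birman--Solomjak bound the spectral counting functions of $-\Delta_{U}$ and of $-\Delta$ (with multiplicities) differ by at most $2$; together these give $\Lambda\cap\sigma(-\Delta)=\emptyset$, the weak interlacing, and the fact that each new eigenvalue has multiplicity between $1$ and $2$. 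Fix $\epsilon>0$ and let $\Lambda_{\infty}\subseteq\Lambda$ be the density-one subset produced by Theorem \ref{thm:MainResultGeneral} for this $\epsilon$.

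Next I would invoke the structure of the perturbed eigenfunctions. For a new eigenvalue $\lambda\in\Lambda$ we have $\lambda\notin\sigma(-\Delta)$, the Green's functions $G_{\lambda}(\cdot,x_{1})$ and $G_{\lambda}(\cdot,x_{2})$ are linearly independent, and every eigenfunction of $-\Delta_{U}$ with eigenvalue $\lambda$ lies in their span. Hence any such eigenfunction normalized in $L^{2}$ equals $g_{\lambda}(\cdot;d_{1},d_{2})$ for some $(d_{1},d_{2})\ne(0,0)$, up to a unimodular constant. Consequently, for every index $k$ with $\lambda_{k}\in\Lambda_{\infty}$ one may write $|\varphi_{k}(x)|^{2}=|g_{\lambda_{k}}(x;d_{1}^{(k)},d_{2}^{(k)})|^{2}$, and since the density-one set in Theorem \ref{thm:MainResultGeneral} is independent of the observable and of the superposition coefficients, that theorem yields
\[
\int_{\mathbb{T}^{2}}a(x)\,|\varphi_{k}(x)|^{2}\,\mathrm{d}x=\frac{1}{4\pi^{2}}\int_{\mathbb{T}^{2}}a(x)\,\mathrm{d}x+O_{a,\epsilon}\bigl(\lambda_{k}^{-1/8+\epsilon}\bigr)
\]
for all $a\in C^{\infty}(\mathbb{T}^{2})$.

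It then remains to pass from the density-one set $\Lambda_{\infty}\subseteq\Lambda$ to a density-one subsequence of the multiplicity-counted sequence $\{\lambda_{k}\}$. Write $P(X)=\#(\Lambda\cap[0,X])$ and $P_{\infty}(X)=\#(\Lambda_{\infty}\cap[0,X])$, and $N^{\mathrm{new}}(X)=\#\{k:\lambda_{k}\le X\}$. Because the new eigenvalues have multiplicity in $\{1,2\}$ we have $P(X)\le N^{\mathrm{new}}(X)\le 2P(X)$, and the number of indices $k$ with $\lambda_{k}\le X$ that are omitted, being $\sum_{\lambda\in(\Lambda\setminus\Lambda_{\infty})\cap[0,X]}m(\lambda)\le 2(P(X)-P_{\infty}(X))=o(P(X))$, is $o(N^{\mathrm{new}}(X))$; moreover by the weak interlacing of $\Lambda$ with $\sigma(-\Delta)$ and Weyl's law all of $P(X),P_{\infty}(X),N^{\mathrm{new}}(X)$ are of order $X$, so "density" here is the usual one. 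Enumerating $\{\lambda_{k}:\lambda_{k}\in\Lambda_{\infty}\}$ as $\{\lambda_{k_{j}}\}$ and letting $j\to\infty$ then gives the asserted asymptotic with error $O_{a,\epsilon}(\lambda_{k_{j}}^{-1/8+\epsilon})$.

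Since the corollary is essentially a formal consequence of Theorem \ref{thm:MainResultGeneral}, I do not expect any genuine analytic obstacle; the only points that need care are purely spectral-theoretic, namely the identification of each perturbed eigenfunction with a normalized superposition $g_{\lambda}(\cdot;d_{1},d_{2})$ of Green's functions, and the uniform bound on the multiplicity of new eigenvalues. Both of these belong to the self-adjoint-extension framework set up earlier in the paper and in Appendix \ref{sec:Appendix-A}, so once that framework is in place the argument above is routine.
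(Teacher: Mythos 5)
Your proposal is correct and follows essentially the same route as the paper: the paper deduces the corollary from Theorem \ref{thm:MainResultGeneral} by noting that $\Lambda=\Lambda_{U}$ weakly interlaces with $\sigma(-\Delta)$ (via the Birman--Solomjak bound and Appendix \ref{sec:Appendix-A}) and that every perturbed eigenfunction is a normalized superposition of the Green's functions $G_{\lambda}(\cdot,x_{j})$. Your additional care with the multiplicity-counted enumeration and the uniformity of $\Lambda_{\infty}$ in $(d_{1},d_{2})$ only makes explicit what the paper leaves implicit.
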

In particular, we improve on the result of Ueberschär \cite[Theorem 1.1]{Ueberschaer3}
for two scatterers, as in that case his result only gives the result
for random $x_{1},x_{2}$ in a set of positive, but not necessarily
full measure. Our result is deterministic and applies for almost all
$x_{1},x_{2}$. 

Note that the formulation of Theorem \ref{thm:MainResultGeneral}
is fairly general, and is independent of the self-adjoint extension
$U$, which is advantageous since in the physics literature one often
considers self-adjoint extensions which are not fixed but vary with
$\lambda$. For a single scatterer, for example, there is a popular
quantization condition known as the ``strong coupling limit'' where
$\tan\frac{\phi}{2}\sim-C\log\lambda$ (see \cite{Shigehara,Ueberschaer}),
in which phenomena such as level repulsion between the new eigenvalues
are observed, as opposed to the ``weak coupling limit'' where the
self-adjoint extension is fixed. In particular, it follows from Theorem
\ref{thm:MainResultGeneral} that uniform distribution in configuration
space holds even if the self-adjoint extensions change with $\lambda.$ 

As stated above, the proof of Theorem \ref{thm:MainResultGeneral}
is easily generalized to non-square tori: For $a>0$, define a lattice
$\mathcal{L}_{0}=\mathbb{Z}\left(1/a,0\right)\oplus\mathbb{Z}\left(0,a\right)$
in $\mathbb{R}^{2},$ and let 
\[
\mathcal{L}=\left\{ x\in\mathbb{R}^{2}:\,\left\langle x,l\right\rangle \in\mathbb{Z},\,\forall l\in\mathcal{L}_{0}\right\} =\mathbb{Z}\left(a,0\right)\oplus\mathbb{Z}\left(0,1/a\right)
\]
be the dual lattice. Consider the torus $\mathbb{T}_{\mathcal{L}_{0}}^{2}=\mathbb{R}^{2}/2\pi\mathcal{L}_{0}$
with scattering points $x_{1},x_{2}\in\mathbb{T}_{\mathcal{L}_{0}}^{2}$,
whose difference we denote by $x_{2}-x_{1}=\left(\alpha_{1},\alpha_{2}\right)$,
and assume that $\left(\alpha_{1}a/\pi,\alpha_{2}/\left(\pi a\right)\right)$
is Diophantine. Let $\Lambda_{0}$ be any set of real numbers which
weakly interlaces with the Laplace eigenvalues, which are the norms
$\left|\xi\right|^{2}$ of the elements $\xi\in\mathcal{L}$. For
$\lambda\in\Lambda_{0}$ and $\left(d_{1},d_{2}\right)\ne\left(0,0\right)$,
let 
\[
G_{\lambda}\left(x\right)=G_{\lambda}\left(x;d_{1},d_{2}\right)=d_{1}G_{\lambda}\left(x,x_{1}\right)+d_{2}G_{\lambda}\left(x,x_{2}\right)
\]
be any non-zero superposition of the Green's functions $\left(\Delta+\lambda\right)^{-1}\delta_{x_{j}}$
and let $g_{\lambda}=G_{\lambda}/\left\Vert G_{\lambda}\right\Vert _{2}$.
\begin{thm}
\label{thm:NonSquareTorus}Let $x_{2}-x_{1}=\left(\alpha_{1},\alpha_{2}\right),$
and assume that $\left(\alpha_{1}a/\pi,\alpha_{2}/\left(\pi a\right)\right)$
is Diophantine. There exists a constant $\gamma>0$, such that for any $ \epsilon>0 $ and for any set $\Lambda_{0}\subseteq\mathbb{R}$ which weakly interlaces
with the Laplace eigenvalues, there exists a subset $\Lambda_{\infty}\subseteq\Lambda_{0}$
of density one so that for all observables $a\in C^{\infty}\left(\mathbb{T}_{\mathcal{L}_{0}}^{2}\right)$,
\[
\int_{\mathbb{T}_{\mathcal{L}_{0}}^{2}}a\left(x\right)\left|g_{\lambda}\left(x;d_{1},d_{2}\right)\right|^{2}\,\mbox{d}x=\frac{1}{4\pi^{2}}\int_{\mathbb{T}_{\mathcal{L}_{0}}^{2}}a\left(x\right)\,\mbox{d}x+O_{a,\epsilon}\left(\lambda_{k_{j}}^{-\gamma+\epsilon}\right)
\]
as $\lambda\to\infty$ along $\Lambda_{\infty}$. \end{thm}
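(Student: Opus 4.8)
The plan is to run the argument for Theorem \ref{thm:MainResultGeneral} with the round norm replaced by the diagonal quadratic form $Q(n)=a^{2}n_{1}^{2}+a^{-2}n_{2}^{2}$. Identifying $\mathcal{L}$ with $\mathbb{Z}^{2}$ via $\xi=(an_{1},n_{2}/a)\leftrightarrow n$, the Laplace eigenvalues become the values $Q(n)$, in these coordinates $G_{\lambda}(x,x_{j})=\frac{1}{4\pi^{2}}\sum_{n\in\mathbb{Z}^{2}}\frac{e^{i\langle n,x-x_{j}\rangle}}{Q(n)-\lambda}$, and $e^{i\langle n,x_{0}\rangle}=e^{i\pi(n_{1}\beta_{1}+n_{2}\beta_{2})}$ with $(\beta_{1},\beta_{2})=(\alpha_{1}a/\pi,\alpha_{2}/(\pi a))$ the Diophantine vector from the statement; translating the torus we may assume $x_{1}=0$ and $x_{2}=x_{0}$. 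Writing $c_{n}=c_{n}(\lambda)=(d_{1}+d_{2}e^{-i\langle n,x_{0}\rangle})/(Q(n)-\lambda)$ one has $\|G_{\lambda}\|_{2}^{2}=4\pi^{2}\sum_{n}|c_{n}|^{2}$ and $\int_{\mathbb{T}_{\mathcal{L}_{0}}^{2}}a\,|g_{\lambda}|^{2}=\big(\sum_{n}|c_{n}|^{2}\big)^{-1}\sum_{n,n'}c_{n}\bar{c}_{n'}\,\widehat a(n'-n)$, whose diagonal $n=n'$ already produces the main term $\widehat a(0)=\frac{1}{4\pi^{2}}\int a$. Since $a$ is smooth, $\widehat a$ decays rapidly, so it remains to bound, for every $0\ne m\in\mathbb{Z}^{2}$, the correlation ratio $R_{m}(\lambda)=\big(\sum_{n}|c_{n}|^{2}\big)^{-1}\sum_{n}c_{n}\bar{c}_{n-m}$ by $O_{\epsilon}(|m|^{A}\lambda^{-\gamma+\epsilon})$ along one density-one set $\Lambda_{\infty}\subseteq\Lambda_{0}$ serving all $m$ and all $(d_{1},d_{2})$; expanding $c_{n}\bar{c}_{n-m}$ splits the numerator of $R_{m}$ into two single-scatterer type pieces with a constant phase and two ``twisted'' pieces carrying $e^{\pm i\langle n,x_{0}\rangle}$.

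I would first dispose of the denominator, and here no exceptional set is needed. Write $\sum_{n}|c_{n}|^{2}=(|d_{1}|^{2}+|d_{2}|^{2})S_{0}(\lambda)+2\,\mbox{Re}\!\big(d_{1}\bar{d}_{2}T(\lambda)\big)$ with $S_{0}(\lambda)=\sum_{n}(Q(n)-\lambda)^{-2}>0$ and $T(\lambda)=\sum_{n}e^{i\langle n,x_{0}\rangle}(Q(n)-\lambda)^{-2}=|T(\lambda)|e^{i\psi(\lambda)}$. Since $2|d_{1}||d_{2}|\le|d_{1}|^{2}+|d_{2}|^{2}$,
\[
\sum_{n}|c_{n}|^{2}\ \ge\ (|d_{1}|^{2}+|d_{2}|^{2})\big(S_{0}(\lambda)-|T(\lambda)|\big)\ =\ (|d_{1}|^{2}+|d_{2}|^{2})\sum_{n}\frac{2\sin^{2}\!\big(\frac12(\langle n,x_{0}\rangle-\psi(\lambda))\big)}{(Q(n)-\lambda)^{2}}.
\]
Decomposing the last sum over dyadic shells $|Q(n)-\lambda|\asymp 2^{j}$ with $2^{j}$ in a suitable range below $\sqrt\lambda$ — each containing $\asymp 2^{j}$ lattice points over which $\langle n,x_{0}\rangle$ is not concentrated near any single residue modulo $2\pi$, this being the first place the Diophantine hypothesis on $x_{0}/\pi$ enters — each such shell contributes $\gg 4^{-j}\cdot 2^{j}=2^{-j}$ whatever $\psi(\lambda)$ is, so $\sum_{n}|c_{n}|^{2}\gg(|d_{1}|^{2}+|d_{2}|^{2})\lambda^{-c_{0}}$ for an absolute $c_{0}>0$, for every large $\lambda\in\Lambda_{0}$, uniformly in $(d_{1},d_{2})$. (No $\lambda$ close to $\sigma(-\Delta)$ may be removed here, since such $\lambda$ only make $S_{0}$ larger, and the bound must survive even for an adversarially chosen $\Lambda_{0}$.) The theorem thus reduces to the numerator estimate $\big|\sum_{n}c_{n}\bar{c}_{n-m}\big|\ll|m|^{A}\lambda^{-\gamma+\epsilon}\sum_{n}|c_{n}|^{2}$.

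The heart of the matter, and the main obstacle, is this numerator estimate, and it is here that a density-zero exceptional set of $\lambda$ enters. For the constant-phase pieces I would use the partial-fraction identity $\frac{1}{(Q(n)-\lambda)(Q(n-m)-\lambda)}=\frac{1}{2B(n,m)-Q(m)}\big(\frac{1}{Q(n-m)-\lambda}-\frac{1}{Q(n)-\lambda}\big)$, with $B$ the bilinear form of $Q$, reducing matters to counting lattice points in thin lunes around the ellipse $\{Q(n)=\lambda\}$ weighted by $|2B(n,m)-Q(m)|^{-1}$; the lattice points with $2B(n,m)$ abnormally close to $Q(m)$ lie in a thin strip about a sparse sublattice and are treated separately, and the power saving $\lambda^{-\gamma}$ comes from discarding those $\lambda$ for which some lattice point sits simultaneously near $\{Q(n)=\lambda\}$ and $\{Q(n-m)=\lambda\}$ — a set whose measure, and number of constituent intervals, is controlled polynomially in $m$ (and then intersected with $\Lambda_{0}$). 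This is the two-variable analogue of the single-scatterer estimate of Rudnick and Uebersch\"ar \cite{RudnickUeberschaer}, and it is the step that uses the ellipse lattice-point counts, hence fixes the value of $\gamma$ — which works out to $1/8$ when $a=1$. For the two twisted pieces $\sum_{n}e^{\pm i\langle n,x_{0}\rangle}(Q(n)-\lambda)^{-1}(Q(n-m)-\lambda)^{-1}$ I would apply the same partial-fraction reduction; here the extra character only helps, a Weyl/van der Corput estimate against the Diophantine condition on $x_{0}/\pi$ keeping these within the same bound — this is exactly the feature that fails for three or more scatterers, whose correlation sums carry several independent difference vectors in place of the single $x_{0}$. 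Finally, summing the resulting $m$-dependent bad sets over $0<|m|\le\lambda^{\epsilon}$ by a dyadic Borel--Cantelli argument (the $|m|>\lambda^{\epsilon}$ being irrelevant by the rapid decay of $\widehat a$) produces a single density-one $\Lambda_{\infty}$ valid for all $m$ and all observables, completing the proof; relative to the square torus of Theorem \ref{thm:MainResultGeneral}, the only changes are the replacement of $|\cdot|^{2}$ by $Q$, of circle lattice-point counts by ellipse counts (which is what makes $\gamma$ depend on $a$), and the appearance of the Diophantine condition in the normalization $(\alpha_{1}a/\pi,\alpha_{2}/(\pi a))$.
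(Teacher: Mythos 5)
Your setup (reduction to the quadratic form $Q(n)=a^{2}n_{1}^{2}+a^{-2}n_{2}^{2}$, identification of the Diophantine vector, diagonal term giving the main term) is correct and matches the paper, which proves this theorem simply by rerunning the proof of Theorem \ref{thm:MainResultGeneral} with $\left|\xi\right|^{2}$ replaced by $Q$. However, your denominator step contains a genuine gap, and it is not patchable while keeping the claim that ``no exceptional set is needed''. The assertion that each dyadic shell $|Q(n)-\lambda|\asymp 2^{j}$ contains $\asymp 2^{j}$ lattice points is false for small $j$ (shells near $\lambda$ can be empty), and the assertion that the phases $\left\langle n,x_{0}\right\rangle $ are not concentrated modulo $2\pi$ over the points of a thin shell is exactly the hard part: the paper proves only a much weaker substitute --- that for the nearest Laplace eigenvalue some representation $\xi$ satisfies $\max_{j}\left|\sin\left(\xi_{j}\alpha_{j}\right)\right|\gg\lambda^{-\epsilon}$ --- and even that requires deleting a density-zero set of $\lambda$ (Lemmas \ref{lem:NandLambdaClose} and \ref{lem:BigSin}). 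More fatally, the shells one can control unconditionally are the outer ones, which give at best $\sum_{n}\left|c_{n}\right|^{2}\gg\lambda^{-c_{0}}$ with $c_{0}$ of order $1/2$ or worse; since Cauchy--Schwarz on the far-from-diagonal part of $\sum_{n}c_{n}\bar{c}_{n-m}$ yields $R_{m}\ll\left(\lambda^{\epsilon}/L\right)^{1/2}\left(\sum_{n}\left|c_{n}\right|^{2}\right)^{-1/2}\ll\lambda^{\left(\epsilon-\delta+c_{0}\right)/2}$ and the density-one count for the exceptional $\lambda$'s forces $\delta<1/4$, no power is saved unless $c_{0}<1/4$, and the stated exponent needs $c_{0}=O\left(\epsilon\right)$. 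Obtaining $c_{0}=O\left(\epsilon\right)$ is precisely what the paper's exceptional set $\Lambda_{1}\cap\Lambda_{2}$ is for, and the weak interlacing hypothesis is exactly what makes this removal legitimate for an adversarially chosen $\Lambda_{0}$ --- so your stated reason for refusing the removal is a misreading of the role of that hypothesis.

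Your numerator treatment is also both heavier than necessary and unproven at its key step. In the paper no exponential-sum or Weyl/van der Corput input appears in the off-diagonal terms at all: after truncation, the correlation sum is \emph{identically zero} along $\Lambda_{\zeta}$ by the purely geometric Lemma \ref{lem:xi_xizeta_distance}, with no Diophantine input; the Diophantine condition is used only for the $L^{2}$ lower bound, and the symmetry that fails for three or more scatterers is the pairing $\xi_{1}\mapsto-\xi_{1}$ in that lower bound, not a character estimate in the correlation sum. Likewise, the exponent $\gamma=\left(1-3\theta\right)/2$ comes from the remainder exponent $\theta$ in the lattice-point count for the ellipse entering the gap and norm estimates for the irrational form $Q$ (see the Remark and \cite{Ueberschaer3}), not from a lune-counting bound on the numerator as you propose.
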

\begin{rem}
One can take $\gamma=\left(1-3\theta\right)/2$, where $\theta$ is
the exponent in the remainder term in Weyl's law on the torus (see \cite{Ueberschaer3}). The
best known exponent $\theta=131/416$ is due to Huxley \cite{Huxley},
so one can take $\gamma=23/832$.
\end{rem}

\subsection*{Acknowledgements}

The research leading to these results has received funding from the
European Research Council under the European Union's Seventh Framework
Programme (FP/2007-2013) / ERC Grant Agreement n. 291147. I would
like to thank Zeév Rudnick and Jens Marklof for helpful comments.

\section{Two Points Scatterers on the Torus}

\subsection{Self-Adjoint Extensions}

Let $\mathbb{T}^{2}=\mathbb{R}^{2}/2\pi\mathbb{Z}^{2}$ be the standard
two-dimensional flat torus. Let $x_{1},x_{2}\in\mathbb{T}^{2}$ two
points on the torus, and denote the difference of $x_{1}$ and $x_{2}$
by $x_{0}=x_{2}-x_{1}=\left(\alpha_{1},\alpha_{2}\right)$. Recall
that we assume that $x_{0}/\pi$ is Diophantine. We rigorously define
the Laplacian perturbed by potentials at $x_{1},x_{2}$ using the
theory of self-adjoint extensions of unbounded symmetric operators.
We give here a brief summary of the procedure -- a more general calculation
for $n$ scatterers can be found in \cite{Ueberschaer3}, however
in the case of two scatterers we are able to give a more explicit
computation.

Let $D_{0}=C_{c}^{\infty}\left(\mathbb{T}^{2}\setminus\left\{ x_{1},x_{2}\right\} \right)$
be the space of smooth functions supported away from the points $x_{1},x_{2}$,
and let $-\Delta_{0}=-\Delta_{\restriction D_{0}}$ be the Laplacian
restricted to this domain. We realize the perturbed operator as a
self-adjoint extension of the operator $-\Delta_{0}$. In fact, it
can be shown that the deficiency indices of $-\Delta_{0}$ are $\left(2,2\right)$,
hence the self-adjoint extensions are parametrized by the unitary
group $U\left(2\right)$.

For $\lambda\notin\sigma\left(-\Delta\right)$, let
\[
G_{\lambda}\left(x,y\right)=\left(\Delta+\lambda\right)^{-1}\delta_{y}\left(x\right)
\]
be the Green's function of the Laplacian on $\mathbb{T}^{2}$. In
particular, it has the $L^{2}$-expansion
\[
G_{\lambda}\left(x,y\right)=-\frac{1}{4\pi^{2}}\sum_{\xi\in\mathbb{Z}^{2}}\frac{e^{i\left\langle \xi,x-y\right\rangle }}{\left|\xi\right|^{2}-\lambda}.
\]

The deficiency subspaces of $-\Delta_{0}$, namely $\ker\left(\Delta_{0}^{*}\pm i\right)$
are spanned by 
\[
\left\{ G_{i}\left(x,x_{1}\right),G_{i}\left(x,x_{2}\right)\right\} ,\left\{ G_{-i}\left(x,x_{1}\right),G_{-i}\left(x,x_{2}\right)\right\} .
\]

Note that $G_{-i}\left(x,x_{j}\right)=\overline{G_{i}\left(x,x_{j}\right)}$,
and that for $\lambda\in\mathbb{R}$, $G_{\lambda}\left(x,x_{j}\right)$ is real. 

Let
\[
c_{1}=\left\Vert G_{\pm i}\left(x,x_{j}\right)\right\Vert _{2}^{2}=\frac{1}{16\pi^{4}}\sum_{\xi\in\mathbb{Z}^{2}}\frac{1}{\left|\xi\right|^{4}+1},
\]
\begin{alignat*}{1}
c_{2} & =\int G_{i}\left(x,x_{1}\right)\overline{G_{i}\left(x,x_{2}\right)}\,\mbox{d}x\\
 & =\int G_{-i}\left(x,x_{1}\right)\overline{G_{-i}\left(x,x_{2}\right)}\,\mbox{d}x=\frac{1}{16\pi^{4}}\sum_{\xi\in\mathbb{Z}^{2}}\frac{\cos\left(\left\langle \xi,x_{0}\right\rangle \right)}{\left|\xi\right|^{4}+1}.
\end{alignat*}
Thus, defining 
\[
\mathbb{G}_{\lambda}\left(x\right)=\left(G_{\lambda}\left(x,x_{1}\right),G_{\lambda}\left(x,x_{2}\right)\right)
\]
(for notational convenience we treat $\mathbb{G}_{\lambda}\left(x\right)$
as a vector with two coordinates) and
\[
T=\left(\begin{array}{cc}
\frac{1}{\sqrt{c_{1}}} & 0\\
-\frac{c_{2}}{\sqrt{c_1(c_{1}^{2}-c_{2}^{2})}} & \sqrt{\frac{c_{1}}{c_{1}^{2}-c_{2}^{2}}}
\end{array}\right),
\]
we get that $T\mathbb{G}_{i}\left(x\right)$ and $T\mathbb{G}_{-i}\left(x\right)$
form orthonormal bases for the deficiency spaces $\ker\left(\Delta_{0}^{*}\pm i\right).$

Denote the self-adjoint extension of $-\Delta_{0}$ corresponding
to $U\in U\left(2\right)$ by $-\Delta_{U}$. The domain of $-\Delta_{U}$
consists of the functions in the Sobolev space $H^{2}\left(\mathbb{T}^{2}\setminus\left\{ x_{1},x_{2}\right\} \right)$
of the form
\begin{equation}
g\left(x\right)=f\left(x\right)+\left\langle v,T\mathbb{G}_{i}\left(x\right)\right\rangle +\left\langle v,UT\mathbb{G}_{-i}\left(x\right)\right\rangle \label{eq:NewEigenfunctions}
\end{equation}
where $f\in H^{2}\left(\mathbb{T}^{2}\right)$ such that $f\left(x_{1}\right)=f\left(x_{2}\right)=0$
and $v\in\mathbb{C}^{2}$. 

We can also rewrite (\ref{eq:NewEigenfunctions}) as
\[
g\left(x\right)=f\left(x\right)+\left\langle T^{*}\left(I+U^{*}\right)v,\mbox{Re}\mathbb{G}_{i}\left(x\right)\right\rangle +i\left\langle T^{*}\left(I-U^{*}\right)v,\mbox{Im}\mathbb{G}_{i}\left(x\right)\right\rangle .
\]
Since $\mbox{Im}G_{i}\left(x,x_{j}\right)\in H^{2}\left(\mathbb{T}^{2}\right)$,
we see that the extension $-\Delta_{-I}$ retrieves the standard Laplacian
$-\Delta$ on $H^{2}\left(\mathbb{T}^{2}\right)$, and that $g\in H^{2}\left(\mathbb{T}^{2}\right)$
if and only if $\left(I+U^{*}\right)v=0$. 

Another class of special extensions are $-\Delta_{U}$ where $\mbox{rank}\left(I+U\right)=1$.
For these extensions, there is a non-zero $v_{0}$ (unique up to multiplication
by a scalar) such that $\left(I+U^{*}\right)v_{0}=0$, and therefore
for the choice $v=cv_{0}$, (\ref{eq:NewEigenfunctions}) reads 
\[
g\left(x\right)=f\left(x\right)+2ic\left\langle T^{*}v_{0},\mbox{Im}\mathbb{G}_{i}\left(x\right)\right\rangle 
\]
so $g\in H^{2}\left(\mathbb{T}^{2}\right)$. Since
\begin{equation}
\mbox{Im}G_{i}\left(x_{1},x_{1}\right)=\mbox{Im}G_{i}\left(x_{2},x_{2}\right)=-4\pi^{2}c_{1},\label{eq:c_1_def}
\end{equation}
\begin{equation}
\mbox{Im}G_{i}\left(x_{1},x_{2}\right)=\mbox{Im}G_{i}\left(x_{2},x_{1}\right)=-4\pi^{2}c_{2}\label{eq:c_2_def}
\end{equation}
we see that $\left\langle T^{*}v_{0},\mbox{Im}\mathbb{G}_{i}\left(x\right)\right\rangle $
and therefore $g$ do not vanish simultaneously at $x_{1},x_{2}$.
Thus, if $\mbox{rank}\left(I+U\right)=1$, then there exists $g\in\mbox{Dom}\left(-\Delta_{U}\right)$
such that $g\in H^{2}\left(\mathbb{T}^{2}\right)$ with either $g\left(x_{1}\right)\ne0$
or $g\left(x_{2}\right)\ne0,$ a phenomenon which does not occur in
the case of a single scatterer.

\subsection{Spectrum and Eigenfunctions}

The eigenvalues of $-\Delta_{U}$ for $U\ne-I$, and the corresponding
eigenfunctions, fall into two kinds: First, there are the ``old'',
or ``unperturbed'' eigenvalues, which are the eigenvalues of the
standard Laplacian $-\Delta$ on $\mathbb{T}^{2}$, i.e., belong to
the set $\mathcal{N}$ of integers which are representable as a sum
of two squares. For each $0\ne\lambda\in\sigma\left(-\Delta\right),$
we will see in Appendix \ref{sec:Appendix-A} that every eigenfunction
of $-\Delta_{U}$ with an eigenvalue $\lambda$ is also an eigenfunction
of $-\Delta$. From this we will deduce that the dimension of the
corresponding eigenspace of $-\Delta_{U}$ is equal to the dimension
of the Laplace eigenspace minus $\mbox{rank}\left(I+U\right)$.

The second group of eigenvalues of $-\Delta_{U}$ will be referred
to as the group of new, or perturbed eigenvalues. These are the eigenvalues
that are affected by the scatterers, and therefore are the main object
of our study. Denote the set of the perturbed eigenvalues of $-\Delta_{U}$
by $\Lambda=\Lambda_{U}$.

For $\lambda\in\Lambda$, the corresponding eigenfunction $G_{\lambda}$
is of the form
\[
G_{\lambda}\left(x\right)=f\left(x\right)+\left\langle T^{*}v,\mathbb{G}_{i}\left(x\right)\right\rangle +\left\langle \left(UT\right)^{*}v,\mathbb{G}_{-i}\left(x\right)\right\rangle 
\]
where $f\left(x_{1}\right)=f\left(x_{2}\right)=0$ and $v\notin\mbox{Ker}\left(I+U^{*}\right)$.

Since $G_{\lambda}$ is an eigenvalue of $-\Delta_{U}$, it is also
an eigenvalue of the adjoint operator $-\Delta_{0}^{*}.$ In addition,
we have $\Delta_{0}^{*}G_{\pm i}\left(x,x_{j}\right)=\mp iG_{\pm i}\left(x,x_{j}\right),$
so
\begin{equation}
0=\left(\Delta_{0}^{*}+\lambda\right)G_{\lambda}=\left(\Delta+\lambda\right)f+\left(-i+\lambda\right)\left\langle T^{*}v,\mathbb{G}_{i}\right\rangle +\left(i+\lambda\right)\left\langle \left(UT\right)^{*}v,\mathbb{G}_{-i}\right\rangle \label{eq:EigenfunctionEq}
\end{equation}
and after simplifying using the resolvent identity
\[
\frac{\mp i+\lambda}{\left(\Delta+\lambda\right)\left(\Delta\pm i\right)}=\frac{-1}{\Delta+\lambda}+\frac{1}{\Delta\pm i}
\]
we get 
\[
0=f+\left\langle T^{*}v,\mathbb{G}_{i}-\mathbb{G}_{\lambda}\right\rangle +\left\langle \left(UT\right)^{*}v,\mathbb{G}_{-i}-\mathbb{G}_{\lambda}\right\rangle =f+\left\langle v,\mathbb{A}_{\lambda}\right\rangle 
\]
where
\[
\mathbb{A}_{\lambda}\left(x\right)=T\left(\mathbb{G}_{i}-\mathbb{G}_{\lambda}\right)\left(x\right)+UT\left(\mathbb{G}_{-i}-\mathbb{G}_{\lambda}\right)\left(x\right).
\]
Evaluating at $x=x_{1},x_{2}$, we see that a necessary condition
on $\lambda$ being a new eigenvalue is that
\[
\det\left(\mathbb{A}_{\lambda}\left(x_{1}\right),\mathbb{A}_{\lambda}\left(x_{2}\right)\right)=0.
\]
We remark that the condition is also sufficient, since if the determinant
is zero, we can easily construct $G_{\lambda}$. Also note that
\[
G_{\lambda}\left(x\right)=\left\langle T^{*}\left(I+U^{*}\right)v,\mathbb{G}_{\lambda}\left(x\right)\right\rangle ,
\]
so the perturbed eigenfunctions are linear combinations of the Green's
functions $G_{\lambda}\left(x,x_{j}\right).$

\section{Uniform Distribution in Configuration Space}

\subsection{Density One Subsequence}

Let $\Lambda_{0}$ be a set of real numbers which weakly interlaces
with the Laplace eigenvalues. We first build a density one subsequence
in $\Lambda_{0}$ along which we will be able to obtain a lower bound
for the $L^{2}$-norm of $G_{\lambda}$.

Recall that $\mathcal{N}$ is the set of integers representable as
a sum of two squares, i.e., the eigenvalues of $-\Delta$, and let
$r_{2}\left(n\right)$ be the number of such representations. For
any $\lambda\in\Lambda_{0}$, define 
\[
n_{\lambda}=\max\left\{ n\in\mathcal{N}:\,n<\lambda\right\} .
\]
Fix a small $\epsilon>0$.
\begin{lem}
\label{lem:NandLambdaClose}There exists a density one subsequence
$\Lambda_{1}$ in $\Lambda_{0}$, such that for every $\lambda\in\Lambda_{1}$,
we have $\lambda-n_{\lambda}\ll\lambda^{\epsilon}.$\end{lem}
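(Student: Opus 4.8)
The plan is to show that the set of ``bad'' $\lambda \in \Lambda_0$, namely those for which $\lambda - n_\lambda > \lambda^\epsilon$, has density zero. Such a bad $\lambda$ lies in a gap $(n, n')$ between two consecutive elements $n < n'$ of $\mathcal{N}$ with $n' - n > \lambda^\epsilon \asymp n^\epsilon$; in fact the condition $\lambda - n_\lambda > \lambda^\epsilon$ forces $\lambda$ into the subinterval $(n + n^\epsilon, n')$ of such a long gap. So the first step is to bound, for $X$ large, the measure (or rather the number of integers, but here we work with the real parameter directly) contributed by all long gaps below $X$.

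First I would recall the classical fact, due to Erd\H{o}s, that the sum $\sum (n_{k+1} - n_k)^2$ over consecutive elements of $\mathcal{N}$ up to $X$ is $O\!\left(X (\log X)^{c}\right)$ for some constant $c$ (one can quote a cruder bound if preferred, e.g.\ $O(X^{1+\epsilon})$, which suffices and follows from standard sieve estimates on the number of sums of two squares in short intervals). Actually for our purposes an even softer input works: the total length of all gaps of $\mathcal{N}$ in $[0,X]$ that exceed $X^{\epsilon/2}$ is $o(X)$; this follows because the number of such long gaps is $o(X^{1-\epsilon/2})$ (since $\#\{\mathcal{N} \cap [0,X]\} \asymp X/\sqrt{\log X}$ and by Hooley/Erd\H{o}s-type bounds on the number of long gaps), so their total length is $o(X^{1-\epsilon/2}) \cdot o(X^{1/2})$-type — but to be safe I would just invoke $\sum_{n_k \le X}(n_{k+1}-n_k)^2 \ll X^{1+\epsilon'}$ and apply Cauchy--Schwarz or a direct dyadic argument.

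Next I would convert this into a statement about $\Lambda_0$ using the weak interlacing hypothesis. Since $\Lambda_0$ weakly interlaces with $\mathcal{N}$, there is a constant $C$ such that each gap of $\mathcal{N}$ contains at most $C$ elements of $\Lambda_0$, and conversely. Hence the number of $\lambda \in \Lambda_0$ with $\lambda \le X$ is comparable to $\#\{\mathcal{N} \cap [0,X]\}$ up to bounded factors, and the number of bad $\lambda \le X$ is at most $C$ times the number of gaps of $\mathcal{N}$ of length $> X^{\epsilon}$ lying in $[0,X]$. Combining with the gap bound from the previous step, the count of bad $\lambda \le X$ is $o\!\left(\#\{\Lambda_0 \cap [0,X]\}\right)$, which is precisely the statement that $\Lambda_1 = \{\lambda \in \Lambda_0 : \lambda - n_\lambda \le \lambda^\epsilon\}$ has density one in $\Lambda_0$. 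One should be slightly careful near the top of the interval (the gap straddling $X$) but this contributes $O(1)$ elements and is harmless.

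The main obstacle — really the only point requiring care — is having a sufficiently strong quantitative bound on long gaps between sums of two squares: one needs that gaps of size $\gg n^\epsilon$ are sparse enough that their cumulative contribution to the counting function is negligible. The cleanest route is to cite the Erd\H{o}s bound $\sum_{n_k \le X}(n_{k+1}-n_k)^2 \ll X (\log X)^{O(1)}$: then the number of $k$ with $n_k \le X$ and $n_{k+1} - n_k > X^\epsilon$ is $\ll X^{1-2\epsilon}(\log X)^{O(1)} = o(X/\sqrt{\log X})$, and multiplying by the interlacing constant $C$ gives the desired density-zero bound for the bad set. Everything else is bookkeeping: fixing $\epsilon$, passing from the integer gap picture to the real parameter $\lambda$, and noting $\lambda \asymp n_\lambda$ so that $\lambda^\epsilon$ and $n_\lambda^\epsilon$ are interchangeable up to constants.
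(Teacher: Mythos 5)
Your argument is correct and is essentially the paper's: the paper simply cites Lemma 2.1 of Rudnick--Uebersch\"{a}r for the fact that $n_{k+1}-n_{k}\ll n_{k}^{\epsilon}$ along a density one subsequence of $\mathcal{N}$, and then transfers to $\Lambda_{0}$ via the weak interlacing property exactly as you do. The Erd\H{o}s second-moment bound is overkill for this step: since the gaps up to $X$ sum to at most $X$, the number of gaps exceeding $X^{\epsilon}$ is trivially $\ll X^{1-\epsilon}=o\left(X/\sqrt{\log X}\right)$, which already suffices against Landau's asymptotic.
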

\begin{proof}
Denote $\mathcal{N}=\left\{ n_{1},n_{2},\dots\right\} .$ By Lemma
2.1 in \cite{RudnickUeberschaer}, for a density one sequence $\mathcal{N}_{1}$
in $\mathcal{N}$ we have $n_{k+1}-n_{k}\ll n_{k}^{\epsilon}$. Let
$\Lambda_{1}=\left\{ \lambda\in\Lambda_{0}:\,n_{\lambda}\in\mathcal{N}_{1}\right\} $.
The statement of the lemma then follows from the weak interlacing
of $\Lambda_{0}$ with $\mathcal{N}$.\end{proof}
\begin{lem}
\label{lem:BigSin}There exists a density one subsequence $\Lambda_{2}$
in $\Lambda_{0}$, such that for all $\lambda\in\Lambda_{2}$ and
$\xi\in\mathbb{Z}^{2}$ such that $\left|\xi\right|^{2}=n_{\lambda}$,
we have $\max\limits _{j=1,2}\left|\sin\left(\xi_{j}\alpha_{j}\right)\right|\gg\lambda^{-\epsilon}.$
\end{lem}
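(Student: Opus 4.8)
The plan is to produce $\Lambda_{2}$ as the set of $\lambda\in\Lambda_{0}$ for which $n_{\lambda}$ lies outside a certain density-zero subset of $\mathcal{N}$, and then transport this to $\Lambda_{0}$ via weak interlacing. Write $\beta_{j}=\alpha_{j}/\pi$. Since $|\sin(\xi_{j}\alpha_{j})|=|\sin(\pi\xi_{j}\beta_{j})|\ge 2\|\xi_{j}\beta_{j}\|$ (distance to the nearest integer) and $n_{\lambda}\le\lambda$ forces $n_{\lambda}^{-\epsilon}\ge\lambda^{-\epsilon}$, it suffices to show: for all but a density-zero set of $n\in\mathcal{N}$, every representation $n=\xi_{1}^{2}+\xi_{2}^{2}$ satisfies $\max_{j}\|\xi_{j}\beta_{j}\|\ge n^{-\epsilon}$. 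Granting that, the transfer to $\Lambda_{0}$ is automatic: each $n$ accounts for at most $C$ values $\lambda\in\Lambda_{0}$ with $n_{\lambda}=n$, while $|\Lambda_{0}\cap[1,N]|\asymp|\mathcal{N}\cap[1,N]|\asymp N/\sqrt{\log N}$.

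Call $n\in\mathcal{N}$ \emph{bad} if it is a perfect square, or if it has a representation $\xi$ with $\xi_{1}\xi_{2}\ne0$ and $\max_{j}\|\xi_{j}\beta_{j}\|<n^{-\epsilon}$. Perfect squares contribute only $O(\sqrt N)$ up to $N$, so it remains to bound the non-square bad $n$. Decomposing dyadically: a non-square bad $n\in[M,2M]$ admits $\xi$ with $1\le|\xi_{1}|,|\xi_{2}|\le\sqrt{2M}$ and $\|\xi_{j}\beta_{j}\|<M^{-\epsilon}$, so the number of such $n$ is at most $A_{1}(M)A_{2}(M)$, where $A_{j}(M)=\#\{k\in\mathbb{Z}:1\le|k|\le\sqrt{2M},\ \|k\beta_{j}\|<M^{-\epsilon}\}$.

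The crux is the estimate $A_{1}(M)A_{2}(M)\ll M^{1-\eta}$ for $M$ large, where $\eta=\epsilon/(4\kappa)$ and $\kappa$ is the Diophantine type of $x_{0}/\pi=(\beta_{1},\beta_{2})$. If $\min_{j}A_{j}(M)\le M^{1/2-\eta}$, use the trivial bound $A_{j}(M)\le 2\sqrt{2M}$. Otherwise both $A_{j}(M)>M^{1/2-\eta}$; since each of these sets is symmetric under $k\mapsto-k$ and lies in $[-\sqrt{2M},\sqrt{2M}]$, pigeonholing on the gaps of its positive part yields an integer $d_{j}$ with $1\le d_{j}\ll M^{\eta}$ and $\|d_{j}\beta_{j}\|<2M^{-\epsilon}$. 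Then $q=d_{1}d_{2}\ll M^{2\eta}$ satisfies $\|q\beta_{1}\|\le d_{2}\|d_{1}\beta_{1}\|\ll M^{\eta-\epsilon}$ and likewise $\|q\beta_{2}\|\ll M^{\eta-\epsilon}$, whereas the joint Diophantine condition gives $\max_{j}\|q\beta_{j}\|\gg q^{-(\kappa-1)}\gg M^{-2\eta(\kappa-1)}$. Since $\eta(2\kappa-1)-\epsilon<0$, for $M$ large the upper bound drops below the lower bound — a contradiction. Summing over dyadic $M\le N$ then gives $\ll N^{1-\eta}+\sqrt N$ bad integers up to $N$, which is $o(N/\sqrt{\log N})$; taking $\Lambda_{2}=\{\lambda\in\Lambda_{0}:n_{\lambda}\text{ is not bad}\}$ completes the argument.

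The main obstacle is the last step: arranging that the "quasi-denominators'' $d_{1},d_{2}$ extracted by pigeonholing are small enough that their product $q=d_{1}d_{2}$ is still a denominator for which the Diophantine lower bound on $\max_{j}\|q\beta_{j}\|$ dominates the $M^{-\epsilon}$-smallness. This is precisely why the cutoff in the dichotomy is placed at $M^{1/2-\eta}$ rather than at $M^{1/2}$; allowing $d_{j}$ as large as $\sqrt M$ (so $q$ as large as $M$) would make $q^{-(\kappa-1)}$ too small and the argument would collapse. One should also note that the joint Diophantine condition on $(\beta_{1},\beta_{2})$ is genuinely needed here: a priori neither $\beta_{1}$ nor $\beta_{2}$ need be Diophantine on its own, so the two constraints $\|\xi_{1}\beta_{1}\|$ small and $\|\xi_{2}\beta_{2}\|$ small must be merged through a common denominator.
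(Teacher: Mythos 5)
Your argument is correct, and while it shares the paper's overall skeleton --- reduce the lemma to showing that the integers $n$ admitting a representation $n=\xi_{1}^{2}+\xi_{2}^{2}$ with $\max_{j}\left\Vert \xi_{j}\alpha_{j}/\pi\right\Vert$ smaller than a negative power of $n$ form a density-zero subset of $\mathcal{N}$, then transfer to $\Lambda_{0}$ via weak interlacing --- the key counting step is done by a genuinely different mechanism. The paper parametrizes $\xi=(n,n+h)$, fixes the slice $h$, and shows that two admissible values $n\ne m$ in the same slice force $\max_{j}\left\Vert (n-m)\alpha_{j}/\pi\right\Vert \ll X^{-\epsilon}$, contradicting the Diophantine condition applied directly to the difference $q=n-m$ when $\left|n-m\right|\le X^{\epsilon/(\kappa-1)}$; this separation gives $O\bigl(X^{1/2-\epsilon/(\kappa-1)}\bigr)$ points per slice and $O\bigl(X^{1-\epsilon/(\kappa-1)}\bigr)$ in total. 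You instead bound the count on a dyadic block by the product $A_{1}(M)A_{2}(M)$ of two one-dimensional counts and rule out the case where both factors exceed $M^{1/2-\eta}$ by pigeonholing out quasi-denominators $d_{1},d_{2}\ll M^{\eta}$ for $\alpha_{1}/\pi$ and $\alpha_{2}/\pi$ separately and applying the Diophantine condition to the product $q=d_{1}d_{2}$; your choice $\eta=\epsilon/(4\kappa)$ does make $\eta(2\kappa-1)<\epsilon$, so the contradiction is genuine. Both routes use the \emph{joint} Diophantine hypothesis in an essential way (neither coordinate need be Diophantine alone), just applied to different auxiliary integers, and both yield a power saving --- the precise exponent is immaterial since only density zero is needed. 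Two minor points in your favour: you handle the degenerate representations with $\xi_{1}\xi_{2}=0$ (perfect squares) explicitly, whereas the paper absorbs them silently into the same lattice-point count; and your product-of-denominators trick is arguably the more transparent explanation of why the two separate smallness conditions must be merged.
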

\begin{proof}
Denote the distance to the nearest integer by
\[
\left\Vert t\right\Vert =\min\limits _{n\in\mathbb{Z}}\left|t-n\right|.
\]
Since $\left|\sin\left(\xi_{j}\alpha_{j}\right)\right|\asymp\left\Vert \xi_{j}\alpha_{j}/\pi\right\Vert $,
it is enough to find a density one subsequence along which $\max\limits _{j=1,2}\left\Vert \xi_{j}\alpha_{j}/\pi\right\Vert \gg\lambda^{-\epsilon}$.

Let $\kappa$ be the type of $x_{0}/\pi$. Let 
\[
A=\left\{ \xi\in\mathbb{Z}^{2}:\,\left|\xi\right|^{2}\le X,\,\max\limits _{j=1,2}\left\Vert \xi_{j}\alpha_{j}/\pi\right\Vert \le X^{-\epsilon}\right\} .
\]
Then by writing $ \xi_1 = n, \xi_2 = n + h $, we have
\[
\#A\le\sum_{|h| \le 2X^{1/2}}\#A_{h}
\]
where
\[
A_{h}=\left\{ n\in\mathbb{Z}:\left|n\right|\le X^{1/2},\,\left\Vert n\alpha_{1}/\pi\right\Vert \le X^{-\epsilon},\,\left\Vert \left(n+h\right)\alpha_{2}/\pi\right\Vert \le X^{-\epsilon}\right\} .
\]
Fix $|h| \le 2X^{1/2},$ and divide the interval $\left[-X^{1/2},X^{1/2}\right]$
into subintervals of length $X^{\epsilon/\left(\kappa-1\right)},$
so the number of such intervals is $\asymp$ $X^{1/2-\epsilon/\left(k-1\right)}.$
For any $n\ne m$ which lie in one of these intervals, the distance
between the points $\left(\left\Vert n\alpha_{1}/\pi\right\Vert ,\left\Vert \left(n+h\right)\alpha_{2}/\pi\right\Vert \right)$
and $\left(\left\Vert m\alpha_{1}/\pi\right\Vert ,\left\Vert \left(m+h\right)\alpha_{2}/\pi\right\Vert \right)$
is bounded from below by
\[
\max_{j=1,2}\left\Vert \left(n-m\right)\alpha_{j}/\pi\right\Vert \gg\frac{1}{\left(n-m\right)^{\kappa-1}}\gg X^{-\epsilon},
\]
so the number of points in each of the intervals belonging to $A_{h}$
is bounded. Therefore $\#A_{h}\ll X^{1/2-\epsilon/\left(k-1\right)},$
and $\#A\ll X^{1-\epsilon/\left(k-1\right)}.$ Moreover, it follows
that

\begin{gather*}
\#\left\{ \xi\in\mathbb{Z}^{2}:\,\left|\xi\right|^{2}\le X,\,\max\limits _{j=1,2}\left\Vert \xi_{j}\alpha_{j}/\pi\right\Vert \le\left(1+\left|\xi\right|^{2}\right)^{-\epsilon}\right\} \ll X^{1-\epsilon/2\left(k-1\right)},
\end{gather*}
and thus the set 
\[
B=\left\{ n\in\mathcal{N}:\,n\le X,\,\exists\xi\in\mathbb{Z}^{2}.\,\left|\xi\right|^{2}=n\,\mbox{ and}\,\max\limits _{j=1,2}\left\Vert \xi_{j}\alpha_{j}/\pi\right\Vert \le\left(1+n\right)^{-\epsilon}\right\} 
\]
satisfies $\#B\ll X^{1-\epsilon/2\left(k-1\right)}.$ On the other
hand, since $r_{2}\left(n\right)\ll n^{\eta}$ for all $\eta>0$,
we have
\[
\#\left\{ n\in\mathcal{N}:\,n\le X\right\} \gg X^{1-\eta}
\]
(in fact, by Landau's theorem \cite{Landau} we have
$ \#\left\{ n\in\mathcal{N}:\,n\le X\right\} \sim C \frac{x}{\sqrt{\log x}} $).
Thus $\mathcal{N}\setminus B$ is a density one set in $\mathcal{N}$.
Let $\Lambda_{2}=\left\{ \lambda\in\Lambda_{0}:\,n_{\lambda}\in\mathcal{N}\setminus B\right\} $.
The statement of the lemma again follows from the weak interlacing
of $\Lambda_{0}$ with $\mathcal{N}$.
\end{proof}
Finally, we define $\Lambda'=\Lambda_{1}\cap\Lambda_{2}$ which is
a density one set in $\Lambda_{0}$ by Lemmas \ref{lem:NandLambdaClose}
and \ref{lem:BigSin}.

\subsection{Lower Bound for the  \texorpdfstring{$L^{2}$}{L2}-norm of \texorpdfstring{$G_\lambda$}{G\textlambda}}

For $\lambda\in\Lambda'$, let

\[
G_{\lambda}\left(x\right)=d_{1}G_{\lambda}\left(x,x_{1}\right)+d_{2}G_{\lambda}\left(x,x_{2}\right),
\]
normalized such that $\left|d_{1}\right|^{2}+\left|d_{2}\right|^{2}=1$.
Assume without loss of generality that $\left|d_{2}\right|^{2}\ge1/2$.

We now give a lower bound for the $L^{2}$-norm of $G_{\lambda}$
along $\lambda\in\Lambda'$:
\begin{lem}
For all $\lambda\in\Lambda'$, we have $\left\Vert G_{\lambda}\right\Vert _{2}^{2}\gg\lambda^{-4\epsilon}.$\end{lem}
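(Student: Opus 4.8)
The plan is to expand $G_\lambda$ in the Fourier basis $\{e^{i\langle\xi,x\rangle}\}_{\xi\in\mathbb{Z}^2}$ and extract a lower bound from the single frequency shell $|\xi|^2=n_\lambda$, which is nonempty by the definition of $n_\lambda$ and the fact that $\lambda\in\Lambda'$ lies just above a sum of two squares. From the $L^2$-expansion of the Green's function we have
\[
G_\lambda(x)=-\frac{1}{4\pi^2}\sum_{\xi\in\mathbb{Z}^2}\frac{d_1e^{-i\langle\xi,x_1\rangle}+d_2e^{-i\langle\xi,x_2\rangle}}{|\xi|^2-\lambda}\,e^{i\langle\xi,x\rangle},
\]
so by Parseval
\[
\|G_\lambda\|_2^2=\frac{1}{4\pi^2}\sum_{\xi\in\mathbb{Z}^2}\frac{\bigl|d_1e^{-i\langle\xi,x_1\rangle}+d_2e^{-i\langle\xi,x_2\rangle}\bigr|^2}{(|\xi|^2-\lambda)^2}
\ge\frac{1}{4\pi^2}\,\frac{1}{(n_\lambda-\lambda)^2}\sum_{|\xi|^2=n_\lambda}\bigl|d_1+d_2e^{-i\langle\xi,x_0\rangle}\bigr|^2,
\]
where I have used $\langle\xi,x_2\rangle-\langle\xi,x_1\rangle=\langle\xi,x_0\rangle$ and factored out the common phase $e^{-i\langle\xi,x_1\rangle}$. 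By Lemma \ref{lem:NandLambdaClose} we have $(n_\lambda-\lambda)^2\ll\lambda^{2\epsilon}$, so it remains to bound the frequency sum from below by $\gg\lambda^{-6\epsilon}$ (absorbing constants), which will give $\|G_\lambda\|_2^2\gg\lambda^{-4\epsilon}$ after a harmless relabelling of $\epsilon$.

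The heart of the matter is therefore the lower bound $\sum_{|\xi|^2=n_\lambda}|d_1+d_2e^{-i\langle\xi,x_0\rangle}|^2\gg\lambda^{-c\epsilon}$. Here I would use the symmetry of the shell $\{|\xi|^2=n_\lambda\}$ under the sign changes $\xi=(\xi_1,\xi_2)\mapsto(-\xi_1,\xi_2)$ and $\xi\mapsto(\xi_1,-\xi_2)$: pick any $\xi$ with $|\xi|^2=n_\lambda$ and consider the (at most four) points $(\pm\xi_1,\pm\xi_2)$ in the shell. Averaging $|d_1+d_2e^{-i\langle\xi,x_0\rangle}|^2$ over the two choices of sign in the first coordinate replaces $e^{-i\xi_1\alpha_1}$ by $\cos(\xi_1\alpha_1)$ in the cross term, and similarly for the second coordinate; carrying this out one finds that the four-point average equals
\[
|d_1|^2+|d_2|^2+2\,\mathrm{Re}(d_1\overline{d_2})\cos(\xi_1\alpha_1)\cos(\xi_2\alpha_2)
=1+2\,\mathrm{Re}(d_1\overline{d_2})\cos(\xi_1\alpha_1)\cos(\xi_2\alpha_2),
\]
using $|d_1|^2+|d_2|^2=1$. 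I would then combine this with a second, complementary combination — for instance the analogous average where one also flips a sign inside $d_2$, or equivalently exploit that the sum over the full shell is invariant under these reflections — to control the case where $\mathrm{Re}(d_1\overline{d_2})\cos(\xi_1\alpha_1)\cos(\xi_2\alpha_2)$ is close to $-1$.

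The main obstacle is exactly this last point: if $d_1,d_2$ are chosen adversarially (e.g. $d_1=-d_2$) then the naive shell sum can collapse, and one must use that $\cos(\xi_1\alpha_1)\cos(\xi_2\alpha_2)$ is bounded \emph{away} from $\pm1$ along $\Lambda'$ — which is precisely what Lemma \ref{lem:BigSin} provides, since $\max_j|\sin(\xi_j\alpha_j)|\gg\lambda^{-\epsilon}$ forces $|\cos(\xi_1\alpha_1)\cos(\xi_2\alpha_2)|\le 1-c\lambda^{-2\epsilon}$. The cleanest way to finish is to write the shell sum as a sum of nonnegative four-point averages of the form above, note each such average is $\ge 1-|\cos(\xi_1\alpha_1)\cos(\xi_2\alpha_2)|\gg\lambda^{-2\epsilon}$ after pairing with the reflected term (the case $\mathrm{Re}(d_1\overline{d_2})\le 0$), while if $\mathrm{Re}(d_1\overline{d_2})>0$ one uses instead the reflection that sends $\cos(\xi_1\alpha_1)$ to $-\cos(\xi_1\alpha_1)$ by shifting the base point — in either case a single surviving term contributes $\gg\lambda^{-2\epsilon}$, and since the shell is nonempty the whole sum is $\gg\lambda^{-2\epsilon}$. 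Feeding this back through the displayed inequality yields $\|G_\lambda\|_2^2\gg\lambda^{-2\epsilon}/\lambda^{2\epsilon}=\lambda^{-4\epsilon}$, as claimed. The assumption $|d_2|^2\ge 1/2$ is used only as a normalization to pin down which Green's function carries the bulk of the mass; it is not essential to the argument but streamlines the bookkeeping.
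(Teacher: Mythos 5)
Your proof is correct and follows essentially the same route as the paper: restrict the Parseval sum to the Fourier shell $\left|\xi\right|^{2}=n_{\lambda}$ and use the reflection symmetry $\left(\xi_{1},\xi_{2}\right)\mapsto\left(\pm\xi_{1},\pm\xi_{2}\right)$ of the shell together with Lemmas \ref{lem:NandLambdaClose} and \ref{lem:BigSin} to defeat the possible cancellation between $d_{1}$ and $d_{2}$. The only (cosmetic) difference is that the paper pairs $\xi$ with the single reflection $\left(-\xi_{1},\xi_{2}\right)$ in the coordinate where the sine is large and applies $\left|a+u\right|^{2}+\left|a+v\right|^{2}\gg\left|u-v\right|^{2}$ to eliminate the $d_{j}$ altogether, whereas you average over all four reflections and bound $1+2\,\mathrm{Re}\left(d_{1}\overline{d_{2}}\right)\cos\left(\xi_{1}\alpha_{1}\right)\cos\left(\xi_{2}\alpha_{2}\right)\ge1-\left|\cos\left(\xi_{1}\alpha_{1}\right)\cos\left(\xi_{2}\alpha_{2}\right)\right|\gg\lambda^{-2\epsilon}$ directly (using $2\left|\mathrm{Re}\left(d_{1}\overline{d_{2}}\right)\right|\le\left|d_{1}\right|^{2}+\left|d_{2}\right|^{2}=1$), which in fact makes the case analysis on the sign of $\mathrm{Re}\left(d_{1}\overline{d_{2}}\right)$ in your last paragraph unnecessary.
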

\begin{proof}
Let $\lambda\in\Lambda'$. We have
\begin{alignat*}{1}
\left\Vert G_{\lambda}\right\Vert _{2}^{2} & =\frac{1}{16\pi^{4}}\sum_{\xi\in\mathbb{Z}^{2}}\frac{\left|d_{1}e^{i\left\langle \xi,x_{1}\right\rangle }+d_{2}e^{i\left\langle \xi,x_{2}\right\rangle }\right|^{2}}{\left(\left|\xi\right|^{2}-\lambda\right)^{2}}\\
 & =\frac{1}{16\pi^{4}}\left|d_{2}\right|^{2}\sum_{\xi\in\mathbb{Z}^{2}}\frac{\left|d_{1}/d_{2}+e^{i\left\langle \xi,x_{0}\right\rangle }\right|^{2}}{\left(\left|\xi\right|^{2}-\lambda\right)^{2}}.
\end{alignat*}
Choose $\xi=\left(\xi_{1},\xi_{2}\right)$ such that $\left|\xi\right|^{2}=n_{\lambda}$.
From Lemmas \ref{lem:NandLambdaClose} and \ref{lem:BigSin}, we have
$\max\limits _{j=1,2}\left|\sin\left(\xi_{j}\alpha_{j}\right)\right|\gg\lambda^{-\epsilon}$
and $n_{\lambda}-\lambda\ll\lambda^{\epsilon}.$ We can assume that
$\left| \sin\left(\xi_{1}\alpha_{1}\right) \right| \gg\lambda^{-\epsilon}$ (so in
particular $\xi_{1}\ne0$). Thus
\begin{alignat*}{1}
\left\Vert G_{\lambda}\right\Vert _{2}^{2} & \gg\sum_{\xi\in\mathbb{Z}^{2}}\frac{\left|d_{1}/d_{2}+e^{i\left\langle \xi,x_{0}\right\rangle }\right|^{2}}{\left(\left|\xi\right|^{2}-\lambda\right)^{2}}\gg\\
 & \gg\lambda^{-2\epsilon}\left(\left|d_{1}/d_{2}+e^{i\left\langle \left(\xi_{1},\xi_{2}\right),x_{0}\right\rangle }\right|^{2}+\left|d_{1}/d_{2}+e^{i\left\langle \left(-\xi_{1},\xi_{2}\right),x_{0}\right\rangle }\right|^{2}\right)\\
 & \gg\lambda^{-2\epsilon}\left|e^{i\left\langle \left(\xi_{1},\xi_{2}\right),x_{0}\right\rangle }-e^{i\left\langle \left(-\xi_{1},\xi_{2}\right),x_{0}\right\rangle }\right|^{2}\asymp\lambda^{-2\epsilon}\sin^{2}\left(\xi_{1}\alpha_{1}\right)\gg\lambda^{-4\epsilon}.
\end{alignat*}

\end{proof}

\subsection{Truncation}

Let $0<\delta<1/4$ and let $L=\lambda^{\delta}.$ We define $G_{\lambda,L}=d_{1}G_{\lambda,L}\left(x,x_{1}\right)+d_{2}G_{\lambda,L}\left(x,x_{2}\right)$
where
\[
G_{\lambda,L}\left(x,x_{j}\right)=-\frac{1}{4\pi^{2}}\sum_{\left|\left|\xi\right|^{2}-\lambda\right|\le L}\frac{e^{i\left\langle \xi,x-x_{j}\right\rangle }}{\left|\xi\right|^{2}-\lambda}
\]
is the truncated Green's function. Denote 
\[
g_{\lambda}=\frac{G_{\lambda}}{\left\Vert G_{\lambda}\right\Vert _{2}},\,g_{\lambda,L}=\frac{G_{\lambda,L}}{\left\Vert G_{\lambda,L}\right\Vert _{2}}.
\]

\begin{lem}
For all $\lambda\in\Lambda'$ we have $\left\Vert g_{\lambda}-g_{\lambda,L}\right\Vert _{2}^{2}\ll\lambda^{5\epsilon}/L$. \end{lem}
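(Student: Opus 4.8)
The statement should follow by combining three ingredients: the elementary inequality controlling the distance between the normalizations of two close vectors, an $L^2$ tail estimate for the difference $G_{\lambda}-G_{\lambda,L}$, and the lower bound $\left\Vert G_{\lambda}\right\Vert _{2}^{2}\gg\lambda^{-4\epsilon}$ already established for $\lambda\in\Lambda'$. Concretely, for any nonzero vectors $u,v$ in a Hilbert space one has
\[
\left\Vert \frac{u}{\left\Vert u\right\Vert }-\frac{v}{\left\Vert v\right\Vert }\right\Vert \le\frac{2\left\Vert u-v\right\Vert }{\left\Vert u\right\Vert },
\]
because $\frac{u}{\left\Vert u\right\Vert }-\frac{v}{\left\Vert v\right\Vert }=\frac{u-v}{\left\Vert u\right\Vert }+v\left(\frac{1}{\left\Vert u\right\Vert }-\frac{1}{\left\Vert v\right\Vert }\right)$ and $\left|\left\Vert u\right\Vert -\left\Vert v\right\Vert \right|\le\left\Vert u-v\right\Vert $. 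Applying this with $u=G_{\lambda}$ and $v=G_{\lambda,L}$ (which is nonzero for $\lambda$ large, since the truncation range includes the block $\left|\xi\right|^{2}=n_{\lambda}$ by Lemma \ref{lem:NandLambdaClose}), and using $\left\Vert G_{\lambda}\right\Vert _{2}^{2}\gg\lambda^{-4\epsilon}$, reduces the lemma to proving $\left\Vert G_{\lambda}-G_{\lambda,L}\right\Vert _{2}^{2}\ll\lambda^{\epsilon}/L$.

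\textbf{The tail estimate.} By Parseval's identity and $\left|d_{1}\right|^{2}+\left|d_{2}\right|^{2}=1$,
\[
\left\Vert G_{\lambda}-G_{\lambda,L}\right\Vert _{2}^{2}=\frac{1}{16\pi^{4}}\sum_{\left|\left|\xi\right|^{2}-\lambda\right|>L}\frac{\left|d_{1}e^{i\left\langle \xi,x_{1}\right\rangle }+d_{2}e^{i\left\langle \xi,x_{2}\right\rangle }\right|^{2}}{\left(\left|\xi\right|^{2}-\lambda\right)^{2}}\ll\sum_{\substack{n\in\mathcal{N}\\\left|n-\lambda\right|>L}}\frac{r_{2}\left(n\right)}{\left(n-\lambda\right)^{2}}.
\]
I would split the range $\left|n-\lambda\right|>L$ into the near part $L<\left|n-\lambda\right|\le\lambda/2$ and the far part $\left|n-\lambda\right|>\lambda/2$. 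On the near part, decompose dyadically into the windows $2^{k}L<\left|n-\lambda\right|\le2^{k+1}L$, $k\ge0$; since $\sum_{n\le X}r_{2}\left(n\right)\ll X$, each window carries arithmetic mass $\ll2^{k}L$, so it contributes $\ll\left(2^{k}L\right)^{-1}$, and summation over $k$ yields $\ll1/L$. The far part, where $\left(n-\lambda\right)^{2}\gg\lambda^{2}$, is handled the same way (dyadically in $n$ for $n>3\lambda/2$, and trivially for $0\le n<\lambda/2$) and contributes $\ll1/\lambda\ll1/L$ since $L=\lambda^{\delta}$ with $\delta<1/4$. This gives $\left\Vert G_{\lambda}-G_{\lambda,L}\right\Vert _{2}^{2}\ll1/L$, and feeding this into the previous paragraph yields $\left\Vert g_{\lambda}-g_{\lambda,L}\right\Vert _{2}^{2}\ll\lambda^{4\epsilon}/L\ll\lambda^{5\epsilon}/L$ for $\lambda\ge1$.

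\textbf{Main obstacle.} The only genuine point is the tail estimate: because $r_{2}\left(n\right)$ is not pointwise $O(1)$, one cannot simply dominate the sum by $\sum_{\left|m\right|>L}m^{-2}$. The dyadic windowing together with the average bound $\sum_{n\le X}r_{2}\left(n\right)\ll X$ is exactly what is needed, and separating off the far tail $n\gg\lambda$ takes care of the fact that $n$ has no upper cutoff. Everything else is bookkeeping, and the $\lambda^{\epsilon}$-type slack in the statement comfortably absorbs the (crude) use of the norm lower bound.
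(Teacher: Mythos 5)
Your overall structure matches the paper's: the elementary inequality for normalized vectors, the lower bound $\left\Vert G_{\lambda}\right\Vert _{2}^{2}\gg\lambda^{-4\epsilon}$ from the preceding lemma, and a tail estimate for $\left\Vert G_{\lambda}-G_{\lambda,L}\right\Vert _{2}^{2}$. The far tail and the normalization step are fine. The gap is in your treatment of the near tail, and it is precisely the step you flag as ``the only genuine point.'' The claim that each dyadic window $2^{k}L<\left|n-\lambda\right|\le2^{k+1}L$ carries arithmetic mass $\ll2^{k}L$ does not follow from $\sum_{n\le X}r_{2}\left(n\right)\ll X$: the mass of such a window is the number of lattice points in an annulus $\left\{ A<\left|\xi\right|^{2}\le B\right\} $ with $B-A=2^{k}L$ and $B\asymp\lambda$, and this equals $\pi\left(B-A\right)+E\left(B\right)-E\left(A\right)$ where $E$ is the Gauss circle-problem remainder. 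The area term dominates only when the window is longer than the fluctuations of $E$; here $L=\lambda^{\delta}$ with $\delta<1/4$, so the shortest windows have length below $\lambda^{1/4}$, while $E\left(x\right)=\Omega_{\pm}\left(x^{1/4}\right)$ and the best known upper bound is only $E\left(x\right)\ll x^{131/416+o(1)}$. Carrying the error term through your dyadic sum produces a contribution of size $\sup\left|E\right|/L^{2}$, which is at best $\lambda^{131/416+o(1)}/L^{2}\gg\lambda^{5\epsilon}/L$ for $\delta$ near $1/4$. So the intermediate bound $\left\Vert G_{\lambda}-G_{\lambda,L}\right\Vert _{2}^{2}\ll1/L$ is not established by your argument (and is essentially a circle-problem-in-short-intervals statement that is out of reach, and false up to logarithms for some windows).

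The fix is the opposite of what you propose: use the \emph{pointwise} divisor-type bound $r_{2}\left(n\right)\ll_{\eta}n^{\eta}$, which is what the paper does. Since the integers $n$ with $\left|n-\lambda\right|>L$ are spaced by at least $1$, this gives directly
\begin{equation*}
\left\Vert G_{\lambda}-G_{\lambda,L}\right\Vert _{2}^{2}\ll\sum_{\left|n-\lambda\right|>L}\frac{n^{\epsilon}}{\left(n-\lambda\right)^{2}}\ll\lambda^{\epsilon}/L,
\end{equation*}
and this extra factor $\lambda^{\epsilon}$ is exactly why the lemma states $\lambda^{5\epsilon}/L$ rather than the $\lambda^{4\epsilon}/L$ your (unproved) sharper tail bound would yield. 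With this replacement your proof coincides with the paper's.
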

\begin{proof}
We have
\begin{alignat*}{1}
\left\Vert g_{\lambda}-g_{\lambda,L}\right\Vert _{2}^{2} & \le4\frac{\left\Vert G_{\lambda}-G_{\lambda,L}\right\Vert _{2}^{2}}{\left\Vert G_{\lambda}\right\Vert _{2}^{2}}\ll\lambda^{4\epsilon}\left\Vert G_{\lambda}-G_{\lambda,L}\right\Vert _{2}^{2}.
\end{alignat*}
But
\begin{alignat*}{1}
\left\Vert G_{\lambda}-G_{\lambda,L}\right\Vert _{2}^{2} & =\frac{1}{16\pi^{4}}\sum_{\left|\left|\xi\right|^{2}-\lambda\right|>L}\frac{\left|d_{1}e^{i\left\langle \xi,x_{1}\right\rangle }+d_{2}e^{i\left\langle \xi,x_{2}\right\rangle }\right|^{2}}{\left(\left|\xi\right|^{2}-\lambda\right)^{2}}\\
 & \ll\sum_{\left|\left|\xi\right|^{2}-\lambda\right|>L}\frac{1}{\left(\left|\xi\right|^{2}-\lambda\right)^{2}}\ll\sum_{\left|n-\lambda\right|>L}\frac{n^{\epsilon}}{\left(n-\lambda\right)^{2}}\ll\lambda^{\epsilon}/L.
\end{alignat*}

\end{proof}
For all $a\in C^{\infty}\left(\mathbb{T}^{2}\right)$, we have (see
\cite{RudnickUeberschaer})
\[
\left|\left\langle ag_{\lambda},g_{\lambda}\right\rangle -\left\langle ag_{\lambda,L},g_{\lambda,L}\right\rangle \right|\le2\left\Vert a\right\Vert _{\infty}\left\Vert g_{\lambda}-g_{\lambda,L}\right\Vert _{2}.
\]
Thus, 
\[
\left\langle ag_{\lambda},g_{\lambda}\right\rangle =\left\langle ag_{\lambda,L},g_{\lambda,L}\right\rangle +O\left(\lambda^{\left(5\epsilon-\delta\right)/2}\right).
\]
Taking $ \delta =1/4-\epsilon $, we see that in
order to prove Theorem \ref{thm:MainResultGeneral}, it is enough
to find a density one sequence $\Lambda_{\infty}$ in $\Lambda_{0}$
so that for all $a\in C^{\infty}\left(\mathbb{T}^{2}\right)$ and
all $M>0$,
\begin{equation}
\left\langle ag_{\lambda,L},g_{\lambda,L}\right\rangle =\frac{1}{4\pi^{2}}\int_{\mathbb{T}^{2}}a\left(x\right)\,\mbox{d}x+O\left(\lambda^{-M}\right)\label{eq:MainLimit}
\end{equation}
as $\lambda\to\infty$ along $\Lambda_{\infty}.$

\subsection{Proof of Theorem \ref{thm:MainResultGeneral}}

We first show that for every fixed $\zeta\in\mathbb{Z}^{2}\setminus\left\{ \left(0,0\right)\right\} $
\[
\left\langle e^{i\left\langle \zeta,x\right\rangle }g_{\lambda,L},g_{\lambda,L}\right\rangle =0
\]
along a density one subsequence.

Let
\[
S_{\zeta}=\left\{ \xi\in\mathbb{Z}^{2}:\,\left|\left\langle \xi,\zeta\right\rangle \right|\le2\left|\xi\right|^{2\delta}\right\} ,
\]
and let
\[
\Lambda_{\zeta}=\left\{ \lambda\in\Lambda':\,\forall\xi\in S_{\zeta}.\,\left|\left|\xi\right|^{2}-\lambda\right|>L\right\} .
\]

\begin{lem}
$\Lambda_{\zeta}$ is a density one set in $\Lambda_{0}$.\end{lem}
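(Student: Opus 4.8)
The goal is to show that the "bad" set $\Lambda_0 \setminus \Lambda_\zeta$ has density zero. By the weak interlacing of $\Lambda_0$ with the Laplace eigenvalues $\mathcal{N}$, it suffices to show that the set of $n_\lambda \in \mathcal{N}$ arising from $\lambda \in \Lambda_0 \setminus \Lambda_\zeta$ has density zero in $\mathcal{N}$. Unpacking the definition: $\lambda \notin \Lambda_\zeta$ means there exists $\xi \in S_\zeta$ with $\bigl||\xi|^2 - \lambda\bigr| \le L = \lambda^\delta$. Since $n_\lambda$ is the largest element of $\mathcal{N}$ below $\lambda$ and (on the density-one set $\Lambda_1$) $\lambda - n_\lambda \ll \lambda^\epsilon$, such a $\xi$ forces $n_\lambda$ to lie within $O(\lambda^\delta)$ of the integer $|\xi|^2$, which is itself representable as a sum of two squares with a representation vector in the slab $S_\zeta$. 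So the bad $n$'s are confined to short windows around integers $|\xi|^2$, $\xi \in S_\zeta$.

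First I would count lattice points in the slab: $\#\{\xi \in \mathbb{Z}^2 : |\xi|^2 \le X,\ |\langle \xi,\zeta\rangle| \le 2|\xi|^{2\delta}\}$. Fixing $\zeta \ne 0$, the condition $|\langle \xi,\zeta\rangle| \le 2X^\delta$ confines $\xi$ to $O(X^\delta/|\zeta|)$ parallel lines, on each of which $|\xi| \le X^{1/2}$ gives $O(X^{1/2})$ points; hence $\#S_\zeta \cap \{|\xi|^2 \le X\} \ll_\zeta X^{1/2+\delta}$. Each such $\xi$ contributes a window $\{n \in \mathbb{Z} : |n - |\xi|^2| \le C\lambda^\delta \asymp X^\delta\}$ of length $O(X^\delta)$ that can contain a bad $n_\lambda$. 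Therefore the number of bad $n \le X$ is $\ll_\zeta X^{1/2+2\delta}$. Since $\delta < 1/4$, this is $\ll_\zeta X^{1-\eta}$ for some $\eta > 0$ (e.g. $\eta = 1/2 - 2\delta > 0$).

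Comparing with the lower bound $\#\{n \in \mathcal{N} : n \le X\} \gg X^{1-\eta'}$ for every $\eta' > 0$ (indeed $\asymp X/\sqrt{\log X}$ by Landau's theorem, as noted earlier), we conclude that the bad $n$'s form a density-zero subset of $\mathcal{N}$. Intersecting with the density-one set $\Lambda_1$ from Lemma~\ref{lem:NandLambdaClose} (to ensure $\lambda - n_\lambda \ll \lambda^\epsilon$, so that a bad $\xi$ for $\lambda$ really does land in a short window around some $|\xi|^2$ near $n_\lambda$) and with $\Lambda' \subseteq \Lambda_0$, and using weak interlacing once more to transfer the density statement back from $\mathcal{N}$ to $\Lambda_0$, we obtain that $\Lambda_\zeta$ has density one in $\Lambda_0$.

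The only mild subtlety — and the step to be careful with — is bookkeeping the dependence on $\lambda$ versus $X$ in the window lengths: one should fix a dyadic range $\lambda \asymp X$, so that $L = \lambda^\delta \asymp X^\delta$ uniformly, carry out the count there, and sum over dyadic blocks; the geometric series in $X^{1-\eta}$ converges appropriately. The implied constants depend on $\zeta$, which is harmless since $\zeta$ is fixed throughout this lemma (the union over all $\zeta$ needed for the full theorem is handled afterwards by a diagonal argument). No deep input is required beyond the slab lattice-point count and Landau's lower bound for sums of two squares; the argument is essentially the same as the proof of Lemma~\ref{lem:BigSin}, replacing the Diophantine slab there by the linear slab $S_\zeta$ here.
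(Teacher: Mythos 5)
Your argument is correct and follows essentially the same route as the paper: a lattice-point count of $O(X^{1/2+\delta})$ for the slab $S_\zeta$, multiplied by a window/fiber of length $O(X^{\delta})$ controlled via weak interlacing, with $\delta<1/4$ and Landau's lower bound for $\#\{n\in\mathcal{N}:n\le X\}$ closing the density argument. The only cosmetic differences are that the paper sends each bad $\lambda$ directly to the nearest norm $|\xi|^2$ with $\xi\in S_\zeta$ (rather than passing through $n_\lambda$ and $\Lambda_1$), and that it keeps the factor $1/|\zeta|$ explicit, since the later union over $|\zeta|\le\lambda^{\epsilon}$ is handled by summing these bounds over $\zeta$ rather than by a diagonal argument.
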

\begin{proof}
We follow the proof of Proposition 6.1 in \cite{RudnickUeberschaer}.
Write $\zeta=\left(p,q\right),\zeta^{\perp}=\left(-q,p\right)$. Then
every $\xi\in S_{\zeta}$ can be written as $\xi=u\frac{\zeta}{\left|\zeta\right|}+v\frac{\zeta^{\perp}}{\left|\zeta^{\perp}\right|}$,
and therefore the set of lattice points $\left\{ \xi\in S_{\zeta}:\,\left|\xi\right|^{2}\le X\right\} $
is contained in the rectangle 
\[
\left\{ u\frac{\zeta}{\left|\zeta\right|}+v\frac{\zeta^{\perp}}{\left|\zeta^{\perp}\right|}:\,u\le\frac{2X^{\delta}}{\left|\zeta\right|},\,v\le\sqrt{X}\right\} .
\]
Since the number of lattice points inside a rectangle is bounded (up
to a constant) by the area of the rectangle, we see that
\[
\#\left\{ \xi\in S_{\zeta}:\,\left|\xi\right|^{2}\le X\right\} \ll\frac{X^{1/2+\delta}}{\left|\zeta\right|}.
\]
Let $\mathcal{N_{\zeta}}\subseteq\mathcal{N}$ be the set of norms
$\left|\xi\right|^{2}$ in $S_{\zeta}$. Define a map $\phi:\Lambda'\setminus\Lambda_{\zeta}\to\mathcal{N}_{\zeta}$
which takes $\lambda\in\Lambda'\setminus\Lambda_{\zeta}$ to the closest
element $n\in\mathcal{N_{\zeta}}$ to $\lambda$ (if there are two
elements with the same distance take the smallest of them). For every
$\mathcal{N}_{\zeta}$ we have 
\[
\#\phi^{-1}\left(n\right)\le\#\left\{ \lambda\in\Lambda'\setminus\Lambda_{\zeta}:\,\exists\xi\in S_{\zeta}.\,\left|\xi\right|^{2}=n,\left|n-\lambda\right|\le L\right\} \ll n^{\delta}.
\]
Thus,
\begin{alignat*}{1}
\#\left\{ \lambda\in\Lambda'\setminus\Lambda_{\zeta}:\,\lambda\le X\right\}  & \le\sum_{\begin{subarray}{c}
n\in\mathcal{N}_{\zeta}\\
n\le2X
\end{subarray}}\#\phi^{-1}\left(n\right)\ll X^{\delta}\#\left\{ n\in\mathcal{N}_{\zeta}:n\le2X\right\} \\
 & \le X^{\delta}\#\left\{ \xi\in S_{\zeta}:\,\left|\xi\right|^{2}\le2X\right\} \ll\frac{X^{1/2+2\delta}}{\left|\zeta\right|},
\end{alignat*}
so $\Lambda_{\zeta}$ is a density one set in $\Lambda_{0}$ (since
$\delta<1/4).$ \end{proof}
\begin{lem}
\label{lem:xi_xizeta_distance}For all $\lambda\in\Lambda_{\zeta}$ such that $ \lambda^\delta \gg |\zeta|^2  $,
\[
\left|\left|\xi\right|^{2}-\lambda\right|\le L\Longrightarrow\left|\left|\xi+\zeta\right|^{2}-\lambda\right|>L.
\]
\end{lem}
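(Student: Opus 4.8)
The plan is a short triangle-inequality estimate, with the real content being the extraction from the hypothesis $\lambda\in\Lambda_{\zeta}$ of the fact that $\xi$ cannot lie in the exceptional set $S_{\zeta}$. So suppose $\big||\xi|^{2}-\lambda\big|\le L$; we must show $\big||\xi+\zeta|^{2}-\lambda\big|>L$. Expanding $|\xi+\zeta|^{2}=|\xi|^{2}+2\langle\xi,\zeta\rangle+|\zeta|^{2}$ gives
\[
|\xi+\zeta|^{2}-\lambda=\bigl(|\xi|^{2}-\lambda\bigr)+2\langle\xi,\zeta\rangle+|\zeta|^{2},
\]
so by the reverse triangle inequality it is enough to bound $\big|2\langle\xi,\zeta\rangle\big|$ from below by more than $\big||\xi|^{2}-\lambda\big|+|\zeta|^{2}+L$. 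The heuristic is that the shift by $\zeta$ changes the squared norm by exactly $2\langle\xi,\zeta\rangle$, and belonging to $\Lambda_{\zeta}$ forces this quantity to be large.

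First I would observe that $\xi\notin S_{\zeta}$: by the definition of $\Lambda_{\zeta}$, every $\eta\in S_{\zeta}$ satisfies $\big||\eta|^{2}-\lambda\big|>L$, which is incompatible with our assumption $\big||\xi|^{2}-\lambda\big|\le L$. Hence, by the definition of $S_{\zeta}$, we have $\big|\langle\xi,\zeta\rangle\big|>2|\xi|^{2\delta}$. Next, since $\big||\xi|^{2}-\lambda\big|\le L=\lambda^{\delta}$ and $\delta<1/4$, for $\lambda$ larger than an absolute constant we get $|\xi|^{2}\ge\lambda-\lambda^{\delta}\ge\lambda/2$, and therefore $|\xi|^{2\delta}\ge(\lambda/2)^{\delta}=2^{-\delta}\lambda^{\delta}>\tfrac{3}{4}L$, using $2^{-\delta}>2^{-1/4}>3/4$. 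Combining the two displays, $\big|2\langle\xi,\zeta\rangle\big|>4|\xi|^{2\delta}>3L$.

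Finally, using the hypothesis $\lambda^{\delta}\gg|\zeta|^{2}$ — which, once the implied constant is fixed, we read simply as $|\zeta|^{2}\le L$ — together with $\big||\xi|^{2}-\lambda\big|\le L$, the reverse triangle inequality gives
\[
\big||\xi+\zeta|^{2}-\lambda\big|\ \ge\ \big|2\langle\xi,\zeta\rangle\big|-\big||\xi|^{2}-\lambda\big|-|\zeta|^{2}\ >\ 3L-L-L\ =\ L,
\]
which is the claim. I do not expect a genuine obstacle here: the only point requiring care is to fix the implied constant in $\lambda^{\delta}\gg|\zeta|^{2}$ and the threshold for ``$\lambda$ large'' coherently, so that the two slacks in the estimates above survive. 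The purpose of the lemma is that, in the double sum defining $\langle e^{i\langle\zeta,x\rangle}g_{\lambda,L},g_{\lambda,L}\rangle$, the only possibly nonzero contributions come from pairs $\xi,\xi+\zeta$ whose squared norms both lie in the window $[\lambda-L,\lambda+L]$; the lemma shows no such pair exists along $\Lambda_{\zeta}$, so that matrix coefficient vanishes.
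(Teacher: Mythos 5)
Your proof is correct and follows essentially the same route as the paper: deduce $\xi\notin S_{\zeta}$ from the definition of $\Lambda_{\zeta}$, hence $|\langle\xi,\zeta\rangle|>2|\xi|^{2\delta}$, and conclude by the reverse triangle inequality applied to $|\xi+\zeta|^{2}-\lambda=(|\xi|^{2}-\lambda)+2\langle\xi,\zeta\rangle+|\zeta|^{2}$. You merely make explicit the final arithmetic ($|\xi|^{2\delta}>\tfrac{3}{4}L$, $|\zeta|^{2}\le L$) that the paper leaves implicit.
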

\begin{proof}
For all $\lambda\in\Lambda_{\zeta}$ such that $ \lambda^\delta \gg |\zeta|^2  $, if $\left|\left|\xi\right|^{2}-\lambda\right|\le L$, then
$\xi\notin S_{\zeta},$ i.e., $\left|\left\langle \xi,\zeta\right\rangle \right|>2\left|\xi\right|^{2\delta},$
and therefore
\[
\left|\left|\xi+\zeta\right|^{2}-\lambda\right|\ge2\left|\left\langle \xi,\zeta\right\rangle \right|-\left|\left|\xi\right|^{2}-\lambda\right|-\left|\zeta\right|^{2}>L.
\]

\end{proof}
\emph{\phantom{}Proof of Theorem \ref{thm:MainResultGeneral}:} We
have
\begin{alignat*}{1}
\left\langle e^{i\left\langle \zeta,x\right\rangle }g_{\lambda,L},g_{\lambda,L}\right\rangle  & =\frac{\left\langle e^{i\left\langle \zeta,x\right\rangle }G_{\lambda,L},G_{\lambda,L}\right\rangle }{\left\Vert G_{\lambda,L}\right\Vert _{2}^{2}}.
\end{alignat*}
Denoting $c\left(\xi\right)=d_{1}e^{-i\left\langle \xi,x_{1}\right\rangle }+d_{2}e^{-i\left\langle \xi,x_{2}\right\rangle },$
note that
\[
\left\langle e^{i\left\langle \zeta,x\right\rangle }G_{\lambda,L},G_{\lambda,L}\right\rangle =\frac{16}{\pi^{4}}\sum_{\begin{subarray}{c}
\left|\left|\xi\right|^{2}-\lambda\right|\le L\\
\left|\left|\xi+\zeta\right|^{2}-\lambda\right|\le L
\end{subarray}}\frac{c\left(\xi\right)\overline{c\left(\xi+\zeta\right)}}{\left(\left|\xi\right|^{2}-\lambda\right)\left(\left|\xi+\zeta\right|^{2}-\lambda\right)}.
\]
However by Lemma \ref{lem:xi_xizeta_distance}, the last sum is empty
along $\lambda\in\Lambda_{\zeta}$ such that $ \lambda^\delta \gg |\zeta|^2  $, so along this sequence $\left\langle e^{i\left\langle \zeta,x\right\rangle }g_{\lambda,L},g_{\lambda,L}\right\rangle =0$.

We conclude (\ref{eq:MainLimit}) by an argument which can be found
in \cite{Ueberschaer3}. We expand $a$ into a Fourier series:
\[
a\left(x\right)=\sum_{\zeta\in\mathbb{Z}^{2}}\hat{a}\left(\zeta\right)e^{i\left\langle \zeta,x\right\rangle }.
\]
By the rapid decay of the Fourier coefficients $\hat{a}\left(\zeta\right),$
we see that for any $M>0$
\[
a\left(x\right)=\sum_{\left|\zeta\right|\le\lambda^{\epsilon}}\hat{a}\left(\zeta\right)e^{i\left\langle \zeta,x\right\rangle }+O\left(\lambda^{-M}\right).
\]
Define
\[
\Lambda_{\infty}=\left\{ \lambda\in\Lambda':\,\forall\left|\zeta\right|\le\lambda^{\epsilon}.\,\lambda\in\Lambda_{\zeta}\right\} .
\]
We have
\begin{alignat*}{1}
\#\left\{ \lambda\in\Lambda'\setminus\Lambda_{\infty}:\,\lambda\le X\right\}  & \le\sum_{\left|\zeta\right|\le X^{\epsilon}}\#\left\{ \lambda\in\Lambda'\setminus\Lambda_{\zeta}:\,\lambda\le X\right\} \\
 & \ll X^{1/2+2\delta}\sum_{0<\left|\zeta\right|\le X^{\epsilon}}\frac{1}{\left|\zeta\right|}\ll X^{1/2+2\delta+\epsilon},
\end{alignat*}
so $\Lambda_{\infty}$ is a density one set in $\Lambda_{0}$. Finally,
for any $\lambda\in\Lambda_{\infty}$
\begin{alignat*}{1}
\left\langle ag_{\lambda,L},g_{\lambda,L}\right\rangle  & =\hat{a}\left(0\right)+\sum_{0<\left|\zeta\right|\le\lambda^{\epsilon}}\hat{a}\left(\zeta\right)\left\langle e^{i\left\langle \zeta,x\right\rangle }g_{\lambda,L},g_{\lambda,L}\right\rangle +O\left(\lambda^{-M}\right)\\
 & =\hat{a}\left(0\right)+O\left(\lambda^{-M}\right),
\end{alignat*}
since for each $\left|\zeta\right|\le\lambda^{\epsilon},$ $\left\langle e^{i\left\langle \zeta,x\right\rangle }g_{\lambda,L},g_{\lambda,L}\right\rangle =0.$
Thus, the limit (\ref{eq:MainLimit}) holds along $\Lambda_{\infty},$
and Theorem \ref{thm:MainResultGeneral} follows.

\qed

\appendix

\section{\label{sec:Appendix-A}}

We study the eigenspaces of a self-adjoint extension $-\Delta_{U}$
corresponding to old eigenvalues of $-\Delta$. Our goal is to show
that their dimensions are equal to the dimensions of the eigenspaces
of $-\Delta$ minus $\mbox{rank}\left(I+U\right)$. We first prove
three auxiliary lemmas:
\begin{lem}
\label{lem:OldEigenspaceDim}Let $0\ne\lambda\in\sigma\left(-\Delta\right),$
and let $d$ be the dimension of the corresponding eigenspace $E_{\lambda}$.
Then the dimension of the subspace 
\[
\left\{ f\in E_{\lambda}:\,f\left(x_{1}\right)=f\left(x_{2}\right)=0\right\} 
\]
is equal to $d-2$.\end{lem}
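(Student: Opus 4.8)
The plan is to realize the evaluation map at the two scattering points as a linear functional on the eigenspace and to show it has rank exactly $2$. Concretely, define $\mathrm{ev}\colon E_{\lambda}\to\mathbb{C}^{2}$ by $\mathrm{ev}(f)=\bigl(f(x_{1}),f(x_{2})\bigr)$; this is well-defined because an $L^{2}$ eigenfunction of $-\Delta$ with eigenvalue $\lambda>0$ is a trigonometric polynomial, hence continuous. The subspace in the statement is precisely $\ker(\mathrm{ev})$, so by rank--nullity its dimension equals $d-\mathrm{rank}(\mathrm{ev})$, and the lemma reduces to proving that $\mathrm{ev}$ is surjective, i.e. $\mathrm{rank}(\mathrm{ev})=2$. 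Note in passing that this forces $d\ge 2$; since $\lambda\ne 0$ is a sum of two squares, $E_{\lambda}$ is spanned by the $e^{i\langle\xi,x\rangle}$ with $|\xi|^{2}=\lambda$, and these always come in the antipodal pairs $\pm\xi$, so indeed $d\ge 2$ and $d$ is even.

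To prove surjectivity it suffices to produce, for each $j\in\{1,2\}$, an element $f\in E_{\lambda}$ with $f(x_{j})\ne 0$ and $f(x_{3-j})=0$; equivalently, by symmetry and linearity, it is enough to show that the two functionals $f\mapsto f(x_{1})$ and $f\mapsto f(x_{2})$ on $E_{\lambda}$ are linearly independent. Each functional is represented, via the $L^{2}$ inner product on $E_{\lambda}$, by the ``reproducing'' vector $P_{\lambda}\delta_{x_{j}}$, where $P_{\lambda}$ is the orthogonal projection onto $E_{\lambda}$; explicitly $P_{\lambda}\delta_{x_{j}}(x)=\frac{1}{4\pi^{2}}\sum_{|\xi|^{2}=\lambda}e^{i\langle\xi,x-x_{j}\rangle}$. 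So I must rule out a linear relation $P_{\lambda}\delta_{x_{1}}=c\,P_{\lambda}\delta_{x_{2}}$ for some scalar $c$. Comparing Fourier coefficients over $\{\xi:|\xi|^{2}=\lambda\}$, such a relation would force $e^{-i\langle\xi,x_{1}\rangle}=c\,e^{-i\langle\xi,x_{2}\rangle}$ for every such $\xi$, hence $e^{i\langle\xi,x_{0}\rangle}=c$ is independent of $\xi$ over all representations $|\xi|^{2}=\lambda$. Taking two such lattice points $\xi,\xi'$ (for instance $\xi$ and its reflection, which differ and both have norm$^{2}$ equal to $\lambda$) gives $\langle\xi-\xi',x_{0}\rangle\in 2\pi\mathbb{Z}$ for a nonzero $\xi-\xi'\in\mathbb{Z}^{2}$, i.e. some coordinate of $x_{0}/\pi$, or a combination, is rational — contradicting the Diophantine (in particular irrationality) hypothesis on $x_{0}/\pi$, which guarantees $\langle m,x_{0}\rangle\notin 2\pi\mathbb{Z}$ for every $m\in\mathbb{Z}^{2}\setminus\{0\}$. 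Hence no such $c$ exists, the two functionals are independent, $\mathrm{ev}$ is onto, and $\dim\ker(\mathrm{ev})=d-2$.

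The only genuinely delicate point is confirming that $\langle\xi-\xi',x_{0}\rangle\notin 2\pi\mathbb{Z}$ whenever $\xi,\xi'$ are distinct lattice points of the same norm: one needs that the difference set of a norm-sphere in $\mathbb{Z}^{2}$ always contains a nonzero vector (true, since $\pm\xi$ both lie on the sphere, giving the nonzero difference $2\xi$), and then the Diophantine assumption immediately yields $\langle 2\xi,x_{0}\rangle\notin 2\pi\mathbb{Z}$ because $2\xi\ne 0$. This is where the standing hypothesis that $x_{0}/\pi$ is Diophantine — in fact mere irrationality of the coordinates suffices here — is used; everything else is bookkeeping with rank--nullity and Fourier coefficients.
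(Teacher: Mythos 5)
Your framework---rank--nullity for the evaluation map $\mathrm{ev}(f)=(f(x_{1}),f(x_{2}))$, reducing the lemma to surjectivity of $\mathrm{ev}$, i.e.\ to showing that $e^{i\langle\xi,x_{0}\rangle}$ is not constant over the lattice points $\xi$ with $|\xi|^{2}=\lambda$---is sound and is essentially equivalent to the paper's explicit construction of bases. The gap is in the final non-vanishing step. You take the antipodal pair $\xi'=-\xi$ and justify the contradiction by asserting that the Diophantine hypothesis on $x_{0}/\pi$ guarantees $\langle m,x_{0}\rangle\notin2\pi\mathbb{Z}$ for \emph{every} nonzero $m\in\mathbb{Z}^{2}$. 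That assertion is false: the simultaneous Diophantine condition in this paper constrains $\max_{j}|\alpha_{j}/\pi-p_{j}/q|$ and does not exclude $\mathbb{Q}$-linear relations among $1,\alpha_{1}/\pi,\alpha_{2}/\pi$. For instance, $x_{0}/\pi=(\sqrt{2},\,2-\sqrt{2})$ is Diophantine of type $2$ (already the first coordinate satisfies $|\sqrt{2}-p/q|\gg q^{-2}$) and has both coordinates irrational, yet for $\lambda=2$ and $\xi=(1,1)$ one gets $\langle2\xi,x_{0}\rangle=2\pi\sqrt{2}+2\pi(2-\sqrt{2})=4\pi\in2\pi\mathbb{Z}$, so your chosen pair yields no contradiction; ``mere irrationality of the coordinates'' does not suffice either.

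The repair is to choose a pair of representations whose difference lies on a coordinate axis, which is exactly what the paper does. Since $x_{0}/\pi$ is Diophantine, at least one coordinate, say $\alpha_{1}/\pi$, is irrational. Pick $\xi=(\xi_{1},\xi_{2})$ with $|\xi|^{2}=\lambda$ and $\xi_{1}\ne0$ (always possible for $\lambda\ne0$, swapping coordinates of a representation if necessary) and take $\xi'=(-\xi_{1},\xi_{2})$. Then $\xi-\xi'=(2\xi_{1},0)$ and $\langle\xi-\xi',x_{0}\rangle=2\xi_{1}\alpha_{1}\notin2\pi\mathbb{Z}$ precisely because $\alpha_{1}/\pi\notin\mathbb{Q}$ and $\xi_{1}\ne0$. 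In the paper's language, the witness $g(x)=e^{i\langle x-x_{1},(-\xi_{1},\xi_{2})\rangle}-e^{i\langle x-x_{1},\xi\rangle}$ vanishes at $x_{1}$ but has $|g(x_{2})|=2|\sin(\xi_{1}\alpha_{1})|\ne0$. With this substitution for your antipodal pair, the rest of your argument goes through.
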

\begin{proof}
Since $x_{0}/\pi$ is Diophantine, we can assume that $\alpha_{1}/\pi\notin\mathbb{Q}$.
Fix $\xi=\left(\xi_{1},\xi_{2}\right)$ such that $\xi_{1}\ne0$,
$\left|\xi\right|^{2}=\lambda$. The functions 
\[
\left\{ e^{i\left\langle x-x_{1},\eta\right\rangle }-e^{i\left\langle x-x_{1},\xi\right\rangle }\right\} _{\left|\eta\right|^{2}=\lambda,\,\eta\ne\xi}
\]
form a basis for the subspace $\left\{ f\in E_{\lambda}:\,f\left(x_{1}\right)=0\right\} .$
Choose $\eta=\left(-\xi_{1},\xi_{2}\right)$, and let g$\left(x\right)=e^{i\left\langle x-x_{1},\eta\right\rangle }-e^{i\left\langle x-x_{1},\xi\right\rangle }$.
Then 
\[
\left|g\left(x_{2}\right)\right|=2\left|\sin\left(\xi_{1}\alpha_{1}\right)\right|\ne0
\]
since $\alpha_{1}/\pi\notin\mathbb{Q},$ and therefore the functions
\[
\left\{ e^{i\left\langle x-x_{1},\zeta\right\rangle }-e^{i\left\langle x-x_{1},\xi\right\rangle }-\frac{g\left(x\right)}{g\left(x_{2}\right)}\left(e^{i\left\langle x_{0},\zeta\right\rangle }-e^{i\left\langle x_{0},\xi\right\rangle }\right)\right\} _{\left|\zeta\right|^{2}=\lambda,\,\zeta\ne\xi,\eta}
\]
form a basis for the subspace $\left\{ f\in E_{\lambda}:\,f\left(x_{1}\right)=f\left(x_{2}\right)=0\right\} $.
\end{proof}
Recall that if $\mbox{rank}\left(I+U\right)=1$, $v_{0}$ is defined
to be the (unique up to scalar) non-zero vector such that $\left(I+U^{*}\right)v_{0}=0.$
We have the following property for $\left\langle T^{*}v_{0},\mbox{Im}\mathbb{G}_{i}\left(x\right)\right\rangle $:
\begin{lem}
\label{lem:ImGiNotEigenfunction}Assume that $\mbox{rank}\left(I+U\right)=1$.
Then $\left\langle T^{*}v_{0},\mbox{Im}\mathbb{G}_{i}\left(x\right)\right\rangle $
is not an eigenfunction of $-\Delta$. \end{lem}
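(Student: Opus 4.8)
The plan is to write $\psi(x):=\langle T^{*}v_{0},\mathrm{Im}\,\mathbb{G}_{i}(x)\rangle$ as an explicit Fourier series and to show that its coefficients cannot be supported on a single sphere $\{\xi\in\mathbb{Z}^{2}:|\xi|^{2}=\lambda\}$, which is what membership in an eigenspace of $-\Delta$ would require. The irrationality of $\alpha_{1}/\pi$ (already exploited in the proof of Lemma \ref{lem:OldEigenspaceDim}) will do all the work.

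First I would put $\mathrm{Im}\,\mathbb{G}_{i}$ in a convenient form. Since $G_{-i}=\overline{G_{i}}$, we have $\mathrm{Im}\,G_{i}(x,x_{j})=\tfrac{1}{2i}\bigl(G_{i}(x,x_{j})-G_{-i}(x,x_{j})\bigr)$, and the resolvent identity $(\Delta+i)^{-1}-(\Delta-i)^{-1}=-2i(\Delta^{2}+1)^{-1}$ gives $\mathrm{Im}\,G_{i}(x,x_{j})=-(\Delta^{2}+1)^{-1}\delta_{x_{j}}(x)=-\tfrac{1}{4\pi^{2}}\sum_{\xi\in\mathbb{Z}^{2}}\tfrac{e^{i\langle\xi,x-x_{j}\rangle}}{|\xi|^{4}+1}$; alternatively this follows directly from the $L^{2}$-expansion of $G_{i}$. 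Writing $(\mu_{1},\mu_{2})=T^{*}v_{0}$, which is a nonzero vector because $T$ is invertible and $v_{0}\ne0$, this gives
\[
\psi(x)=-\frac{1}{4\pi^{2}}\sum_{\xi\in\mathbb{Z}^{2}}\frac{\mu_{1}e^{-i\langle\xi,x_{1}\rangle}+\mu_{2}e^{-i\langle\xi,x_{2}\rangle}}{|\xi|^{4}+1}\,e^{i\langle\xi,x\rangle}.
\]

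Next, suppose for contradiction that $\psi$ is an eigenfunction of $-\Delta$, with eigenvalue $\lambda$. Then $\psi$ is a trigonometric polynomial supported on $\{\xi:|\xi|^{2}=\lambda\}$, so by uniqueness of Fourier coefficients $\mu_{1}e^{-i\langle\xi,x_{1}\rangle}+\mu_{2}e^{-i\langle\xi,x_{2}\rangle}=0$ for every $\xi\in\mathbb{Z}^{2}$ with $|\xi|^{2}\ne\lambda$. Evaluating along $\xi=(n,0)$ with $n^{2}\ne\lambda$ rules out $\mu_{1}=0$ and $\mu_{2}=0$ (each alone forces the other, contradicting $(\mu_{1},\mu_{2})\ne0$), so both are nonzero and $e^{i\langle\xi,x_{0}\rangle}=-\mu_{2}/\mu_{1}$ for all such $\xi$. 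Comparing $\xi=(n,0)$ and $\xi=(n+1,0)$ for two consecutive positive integers avoiding $\sqrt{\lambda}$ gives $e^{i\alpha_{1}}=1$, i.e.\ $\alpha_{1}/\pi\in\mathbb{Q}$, contradicting the fact that $\alpha_{1}/\pi\notin\mathbb{Q}$ (valid since $x_{0}/\pi$ is Diophantine). Hence $\psi$ is not an eigenfunction of $-\Delta$.

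The computation is entirely routine and I do not anticipate a genuine obstacle: the only bookkeeping point is the clean identity $\mathrm{Im}\,G_{i}(x,x_{j})=-\tfrac{1}{4\pi^{2}}\sum_{\xi}\tfrac{e^{i\langle\xi,x-x_{j}\rangle}}{|\xi|^{4}+1}$, after which the argument is the same two-line use of the irrationality of $\alpha_{1}/\pi$ as in Lemma \ref{lem:OldEigenspaceDim}.
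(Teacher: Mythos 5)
Your proof is correct and follows essentially the same route as the paper: expand $\langle T^{*}v_{0},\mathrm{Im}\,\mathbb{G}_{i}\rangle$ in a Fourier series with coefficients proportional to $\mu_{1}e^{-i\langle\xi,x_{1}\rangle}+\mu_{2}e^{-i\langle\xi,x_{2}\rangle}$, observe that being a $-\Delta$-eigenfunction forces these to vanish for all $|\xi|^{2}\ne\lambda$, and contradict the irrationality of $\alpha_{1}/\pi$. The only (immaterial) difference is the choice of test frequencies: the paper pairs $(\xi_{1},\xi_{2})$ with $(-\xi_{1},\xi_{2})$ and computes a $2\times2$ determinant equal to $2|\sin(\xi_{1}\alpha_{1})|$, whereas you compare the consecutive points $(n,0)$ and $(n+1,0)$ to get $e^{i\alpha_{1}}=1$.
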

\begin{proof}
Denote $T^{*}v_{0}=\left(v_{1},v_{2}\right)$, so
\[
\left\langle T^{*}v_{0},\mbox{Im}\mathbb{G}_{i}\left(x\right)\right\rangle =-\frac{1}{4\pi^{2}}\sum_{\xi\in\mathbb{Z}^{2}}\left(v_{1}e^{-i\left\langle \xi,x_{1}\right\rangle }+v_{2}e^{-i\left\langle \xi,x_{2}\right\rangle }\right)\frac{e^{i\left\langle \xi,x\right\rangle }}{\left|\xi\right|^{4}+1}.
\]
Assume that $\left\langle T^{*}v_{0},\mbox{Im}\mathbb{G}_{i}\left(x\right)\right\rangle $
is an eigenfunction of $-\Delta$ with an eigenvalue $\lambda.$ Then
for all $\xi$ such that $\left|\xi\right|^{2}=m\ne\lambda$ we have
\[
v_{1}e^{-i\left\langle \xi,x_{1}\right\rangle }+v_{2}e^{-i\left\langle \xi,x_{2}\right\rangle }=0.
\]
We can again assume that $\alpha_{1}/\pi\notin\mathbb{Q}.$ Choosing
any $\xi=\left(\xi_{1},\xi_{2}\right)$ with $\xi_{1}\ne0$ and $\left|\xi\right|^{2}=m\ne\lambda$,
we get in particular that
\[
\det\left(\begin{array}{cc}
1 & e^{-i\left\langle \left(\xi_{1},\xi_{2}\right),x_{0}\right\rangle }\\
1 & e^{-i\left\langle \left(-\xi_{1},\xi_{2}\right),x_{0}\right\rangle }
\end{array}\right)=0,
\]
however since $\alpha_{1}/\pi\notin\mathbb{Q},$ we have
\[
\left|\det\left(\begin{array}{cc}
1 & e^{-i\left\langle \left(\xi_{1},\xi_{2}\right),x_{0}\right\rangle }\\
1 & e^{-i\left\langle \left(-\xi_{1},\xi_{2}\right),x_{0}\right\rangle }
\end{array}\right)\right|=2\left|\sin\left(\xi_{1}\alpha_{1}\right)\right|\ne0
\]
a contradiction.\end{proof}
\begin{lem}
\label{lem:OldEigenspacesGi}Assume that $\mbox{rank}\left(I+U\right)=1$.
Let $0\ne\lambda\in\sigma\left(-\Delta\right),$ and let $d$ be the
dimension of the corresponding eigenspace $E_{\lambda}$. Then the
dimension of the subspace 
\[
\left\{ g\in E_{\lambda}:\,g\left(x\right)=f\left(x\right)+c\left\langle T^{*}v_{0},\mbox{Im}\mathbb{G}_{i}\left(x\right)\right\rangle ,f\left(x_{1}\right)=f\left(x_{2}\right)=0,c\in\mathbb{C}\right\} 
\]
is equal to $d-1.$\end{lem}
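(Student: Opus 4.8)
The plan is to realize the subspace in question, call it $W$, as an isomorphic copy of an auxiliary space whose dimension can be read off directly from Lemma~\ref{lem:OldEigenspaceDim}. Throughout write $\psi(x)=\langle T^{*}v_{0},\mbox{Im}\mathbb{G}_{i}(x)\rangle$, and recall two facts already in hand: $\psi\in H^{2}(\mathbb{T}^{2})\subset C(\mathbb{T}^{2})$ (since $\mbox{Im}G_{i}(x,x_{j})\in H^{2}(\mathbb{T}^{2})$), and $\psi$ does not vanish simultaneously at $x_{1}$ and $x_{2}$ — this is exactly the consequence of (\ref{eq:c_1_def})--(\ref{eq:c_2_def}) noted above. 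Set
\[
V=\left\{ (f,c)\in H^{2}(\mathbb{T}^{2})\times\mathbb{C}:\,f(x_{1})=f(x_{2})=0,\ f+c\psi\in E_{\lambda}\right\} .
\]
The assignment $(f,c)\mapsto f+c\psi$ is a linear map from $V$ \emph{onto} $W$ (surjectivity is just the definition of $W$; note that if $g\in W$ then $f=g-c\psi\in H^{2}(\mathbb{T}^{2})$ automatically, $g$ being smooth). It is injective: if $f+c\psi=0$ with $f(x_{1})=f(x_{2})=0$, then $c\psi$ vanishes at both points, forcing $c=0$ and then $f=0$. Hence $\dim W=\dim V$.

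To compute $\dim V$ I would use the projection $\pi:V\to\mathbb{C}$, $(f,c)\mapsto c$. Its kernel is $\{(f,0):f\in E_{\lambda},\ f(x_{1})=f(x_{2})=0\}$, which is precisely the space of Lemma~\ref{lem:OldEigenspaceDim}, of dimension $d-2$. It remains to show $\pi$ is onto, i.e.\ that $V$ contains a pair $(f,1)$. Let $P_{\lambda}$ be the orthogonal projection onto $E_{\lambda}$ and put $f_{0}=P_{\lambda}\psi-\psi=-\sum_{|\xi|^{2}\neq\lambda}\hat{\psi}(\xi)e^{i\langle\xi,x\rangle}$; then $f_{0}\in H^{2}(\mathbb{T}^{2})$ and $f_{0}+\psi=P_{\lambda}\psi\in E_{\lambda}$, though $f_{0}$ need not vanish at $x_{1},x_{2}$. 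To correct this I use the evaluation map $E_{\lambda}\to\mathbb{C}^{2}$, $h\mapsto(h(x_{1}),h(x_{2}))$: by Lemma~\ref{lem:OldEigenspaceDim} its kernel has dimension $d-2$, hence its rank is $d-(d-2)=2$, so it is surjective. Choose $h\in E_{\lambda}$ with $h(x_{j})=-f_{0}(x_{j})$ and set $f=f_{0}+h$. Then $f(x_{1})=f(x_{2})=0$ and $f+\psi=(f_{0}+\psi)+h=P_{\lambda}\psi+h\in E_{\lambda}$, so $(f,1)\in V$ and $\pi$ is onto.

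Combining these, $\dim V=\dim\ker\pi+1=(d-2)+1=d-1$, hence $\dim W=d-1$, as claimed. The only step with genuine content is the surjectivity of the evaluation map $E_{\lambda}\to\mathbb{C}^{2}$; this is where Lemma~\ref{lem:OldEigenspaceDim} does the work, after which rank--nullity forces the image to be all of $\mathbb{C}^{2}$. Everything else is bookkeeping, the single point to keep track of being that the non-simultaneous vanishing of $\psi$ at $x_{1},x_{2}$ — guaranteed by (\ref{eq:c_1_def})--(\ref{eq:c_2_def}) — is what makes the map $V\to W$ injective.
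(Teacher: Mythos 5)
Your proof is correct, but it is organized differently from the paper's. The paper first invokes Lemma \ref{lem:ImGiNotEigenfunction} to conclude that $\psi=\left\langle T^{*}v_{0},\mbox{Im}\mathbb{G}_{i}\left(x\right)\right\rangle \notin E_{\lambda}$, so that $\dim\left(E_{\lambda}+\mathbb{C}\psi\right)=d+1$, and then identifies the subspace in question with $\left\{ h\in E_{\lambda}\oplus\mathbb{C}\psi:\,h\left(x_{1}\right)=h\left(x_{2}\right)=0\right\}$ and repeats the dimension count of Lemma \ref{lem:OldEigenspaceDim} on this enlarged $\left(d+1\right)$-dimensional space. You instead fiber the parameter space $V$ over the coefficient $c$ and verify that the fiber over $c=1$ is nonempty by an explicit construction with the spectral projection $P_{\lambda}$, the key input being the surjectivity of the evaluation map $E_{\lambda}\to\mathbb{C}^{2}$, which you correctly extract from Lemma \ref{lem:OldEigenspaceDim} by rank--nullity. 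Both arguments ultimately rest on the same two facts --- that evaluation at $\left(x_{1},x_{2}\right)$ is onto $\mathbb{C}^{2}$ from $E_{\lambda}$, and that $\psi$ does not vanish simultaneously at the two scattering points, which is what (\ref{eq:c_1_def})--(\ref{eq:c_2_def}) deliver --- but yours dispenses with Lemma \ref{lem:ImGiNotEigenfunction} entirely: your count yields $d-1$ whether or not $\psi\in E_{\lambda}$, since in the degenerate case the fiber over $c=1$ would still be nonempty (take $f_{0}=0$). The paper's version is shorter once the auxiliary lemma is in place; yours is more self-contained and makes explicit the step the paper leaves as ``follows similarly to the proof of Lemma \ref{lem:OldEigenspaceDim}.''
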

\begin{proof}
From Lemma \ref{lem:ImGiNotEigenfunction}, $\left\langle T^{*}v_{0},\mbox{Im}\mathbb{G}_{i}\left(x\right)\right\rangle \notin E_{\lambda}$,
and therefore 
\[
\dim\left(E_{\lambda}+\left\langle T^{*}v_{0},\mbox{Im}\mathbb{G}_{i}\left(x\right)\right\rangle \right)=d+1.
\]
The proof of the statement of the lemma now follows similarly to the
proof of Lemma \ref{lem:OldEigenspaceDim}.\end{proof}
\begin{prop}
\label{prop:OldEigenspaces}Let $0\ne\lambda\in\sigma\left(-\Delta\right),$
and assume that $g$ is an eigenfunction of $-\Delta_{U}$ with an
eigenvalue $\lambda$. Then $g$ is an eigenvalue of $-\Delta$. \end{prop}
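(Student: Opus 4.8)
The plan is to start from the fact that $-\Delta_{U}$ is a restriction of $-\Delta_{0}^{*}$, so that $\left(\Delta_{0}^{*}+\lambda\right)g=0$, and to turn this into a statement about the Fourier coefficients $\hat{g}(\eta)$ of $g$, which will be forced to vanish off the Laplace eigenspace $E_{\lambda}$.

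First I would pair $\left(\Delta_{0}^{*}+\lambda\right)g=0$ with functions in $D_{0}=C_{c}^{\infty}(\mathbb{T}^{2}\setminus\{x_{1},x_{2}\})$ to conclude that $\left(\Delta+\lambda\right)g=0$ as a distribution on $\mathbb{T}^{2}\setminus\{x_{1},x_{2}\}$; hence, viewed on all of $\mathbb{T}^{2}$, the distribution $\left(\Delta+\lambda\right)g$ is supported in $\{x_{1},x_{2}\}$. By (\ref{eq:NewEigenfunctions}) and the expansion of $G_{\pm i}$, the function $g$ differs from an $H^{2}(\mathbb{T}^{2})$ function by a linear combination of the Green's functions $G_{\pm i}(\cdot,x_{j})$, each of which is no worse than logarithmically singular at $x_{j}$; applying $\Delta$ to such a combination produces only multiples of $\delta_{x_{1}},\delta_{x_{2}}$ (no derivatives) plus an $L^{2}$ function, so
\[
\left(\Delta+\lambda\right)g=a_{1}\delta_{x_{1}}+a_{2}\delta_{x_{2}}
\]
for some $a_{1},a_{2}\in\mathbb{C}$. (Equivalently, one may simply repeat the derivation of (\ref{eq:EigenfunctionEq}), which nowhere used $\lambda\notin\sigma(-\Delta)$.) Taking Fourier coefficients gives $\left(\lambda-\left|\eta\right|^{2}\right)\hat{g}(\eta)=\tfrac{1}{4\pi^{2}}\bigl(a_{1}e^{-i\left\langle\eta,x_{1}\right\rangle}+a_{2}e^{-i\left\langle\eta,x_{2}\right\rangle}\bigr)$ for every $\eta\in\mathbb{Z}^{2}$.

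The heart of the matter is to show $a_{1}=a_{2}=0$. Since $\lambda\in\sigma(-\Delta)$ there is at least one $\eta$ with $\left|\eta\right|^{2}=\lambda$, and for every such $\eta$ the left-hand side vanishes, forcing $a_{1}e^{-i\left\langle\eta,x_{1}\right\rangle}+a_{2}e^{-i\left\langle\eta,x_{2}\right\rangle}=0$. Because $\lambda\neq0$, there is a representation $\xi=\left(\xi_{1},\xi_{2}\right)$ of $\lambda$ with $\xi_{1}\neq0$ (if every representation had vanishing first coordinate, then $\lambda$ would be a square $\xi_{2}^{2}$, and $\left(\xi_{2},0\right)$ would itself be a representation with nonzero first coordinate); then $\eta=\left(-\xi_{1},\xi_{2}\right)$ is another representation. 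Feeding $\xi$ and $\eta$ into the relation yields a homogeneous $2\times2$ system for $(a_{1},a_{2})$ whose determinant has modulus $2\left|\sin\left(\xi_{1}\alpha_{1}\right)\right|$, which is nonzero because $x_{0}/\pi$ being Diophantine implies $\alpha_{1}/\pi\notin\mathbb{Q}$ — exactly the mechanism used in the proof of Lemma \ref{lem:OldEigenspaceDim}. Hence $a_{1}=a_{2}=0$, so $\left(\lambda-\left|\eta\right|^{2}\right)\hat{g}(\eta)=0$ for all $\eta$, i.e. $\hat{g}(\eta)=0$ whenever $\left|\eta\right|^{2}\neq\lambda$. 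Therefore $g\in E_{\lambda}$, so $g$ is an eigenfunction of $-\Delta$.

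There is no serious obstacle here; the mildly delicate points are the justification that $\left(\Delta+\lambda\right)g$ contains no derivatives of $\delta_{x_{1}},\delta_{x_{2}}$ — which rests on $g$ being only logarithmically singular at those points — and the small arithmetic observation that $\lambda$ admits a representation with a nonzero coordinate, which is precisely what allows the Diophantine (indeed, mere irrationality) hypothesis to be brought to bear. Everything else is the bookkeeping with Green's functions already carried out in Section 2 and Appendix A.
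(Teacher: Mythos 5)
Your proof is correct and follows essentially the same route as the paper's: both reduce to the vanishing of $a_{1}e^{-i\left\langle \eta,x_{1}\right\rangle }+a_{2}e^{-i\left\langle \eta,x_{2}\right\rangle }$ at the two representations $\left(\pm\xi_{1},\xi_{2}\right)$ of $\lambda$ and use that the determinant has modulus $2\left|\sin\left(\xi_{1}\alpha_{1}\right)\right|\ne0$ since $\alpha_{1}/\pi\notin\mathbb{Q}$. The only difference is cosmetic: you package the computation as the distributional identity $\left(\Delta+\lambda\right)g=a_{1}\delta_{x_{1}}+a_{2}\delta_{x_{2}}$ and conclude $g\in E_{\lambda}$ directly, whereas the paper reads off the same Fourier coefficients from (\ref{eq:EigenfunctionEq}) and derives a contradiction with $v\notin\mbox{Ker}\left(I+U^{*}\right)$.
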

\begin{proof}
Assume otherwise, so there exist $v\notin\mbox{Ker}\left(I+U^{*}\right)$
and $f\in H^{2}\left(\mathbb{T}^{2}\right)$ with $f\left(x_{1}\right)=f\left(x_{2}\right)=0$
such that 
\[
g\left(x\right)=f\left(x\right)+\left\langle T^{*}v,\mathbb{G}_{i}\left(x\right)\right\rangle +\left\langle \left(UT\right)^{*}v,\mathbb{G}_{-i}\left(x\right)\right\rangle 
\]
where $g$ is an eigenvalue of $-\Delta_{U}$, and hence of the adjoint
operator $-\Delta_{0}^{*}.$ Thus, as in (\ref{eq:EigenfunctionEq})
\[
0=\left(\Delta_{0}^{*}+\lambda\right)g=\left(\Delta+\lambda\right)f+\left(-i+\lambda\right)\left\langle T^{*}v,\mathbb{G}_{i}\right\rangle +\left(i+\lambda\right)\left\langle \left(UT\right)^{*}v,\mathbb{G}_{-i}\right\rangle .
\]
Assume that $\alpha_{1}/\pi\notin\mathbb{Q},$ and let $\mbox{\ensuremath{\xi}=\ensuremath{\left(\xi_{1},\xi_{2}\right)} }$
such that $\xi_{1}\ne0$ and $\left|\xi\right|^{2}=\lambda$. Evaluating
the Fourier coefficient at $\left(\pm\xi_{1},\xi_{2}\right)$, we
see that
\[
\left\langle T^{*}\left(I+U^{*}\right)v,\left(1,e^{-i\left\langle \left(\xi_{1},\xi_{2}\right),x_{0}\right\rangle }\right)\right\rangle =0
\]
and
\[
\left\langle T^{*}\left(I+U^{*}\right)v,\left(1,e^{-i\left\langle \left(-\xi_{1},\xi_{2}\right),x_{0}\right\rangle }\right)\right\rangle =0.
\]
Since $v\notin\mbox{Ker}\left(I+U^{*}\right)$ it implies that 
\[
\left|\det\left(\begin{array}{cc}
1 & e^{-i\left\langle \left(\xi_{1},\xi_{2}\right),x_{0}\right\rangle }\\
1 & e^{-i\left\langle \left(-\xi_{1},\xi_{2}\right),x_{0}\right\rangle }
\end{array}\right)\right|=0,
\]
a contradiction.\end{proof}
\begin{cor}
Let $0\ne\lambda\in\sigma\left(-\Delta\right).$ Let $d$ be the dimension
of the eigenspace of $-\Delta$ corresponding to $\lambda.$ Then
the dimension of the eigenspace of $-\Delta_{U}$ corresponding to
$\lambda$ is equal to $d-\mbox{rank}\left(I+U\right).$ \end{cor}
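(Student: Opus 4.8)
The plan is to prove the statement by a case analysis on $r=\mbox{rank}\left(I+U\right)\in\left\{0,1,2\right\}$, reducing in each case to the auxiliary lemmas above. The first step, common to all cases, is to identify the $\lambda$-eigenspace of $-\Delta_{U}$ explicitly as $E_{\lambda}\cap\mbox{Dom}\left(-\Delta_{U}\right)$. Indeed, by Proposition \ref{prop:OldEigenspaces} every eigenfunction of $-\Delta_{U}$ with eigenvalue $\lambda$ lies in $E_{\lambda}$, hence in $H^{2}\left(\mathbb{T}^{2}\right)$, and of course in $\mbox{Dom}\left(-\Delta_{U}\right)$; conversely, since $-\Delta_{U}$ is a restriction of $-\Delta_{0}^{*}$ and $-\Delta_{0}^{*}$ acts as $-\Delta$ on $H^{2}\left(\mathbb{T}^{2}\right)$, any $g\in E_{\lambda}$ that also belongs to $\mbox{Dom}\left(-\Delta_{U}\right)$ is automatically an eigenfunction of $-\Delta_{U}$ with eigenvalue $\lambda$.

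Next I would combine the description (\ref{eq:NewEigenfunctions}) of $\mbox{Dom}\left(-\Delta_{U}\right)$ with the observation recorded just after (\ref{eq:NewEigenfunctions}): since $\mbox{Re}G_{i}\left(x,x_{j}\right)\notin H^{2}\left(\mathbb{T}^{2}\right)$ while $\mbox{Im}G_{i}\left(x,x_{j}\right)\in H^{2}\left(\mathbb{T}^{2}\right)$, a function $g=f+\left\langle v,T\mathbb{G}_{i}\right\rangle +\left\langle v,UT\mathbb{G}_{-i}\right\rangle$ with $f\in H^{2}\left(\mathbb{T}^{2}\right)$, $f\left(x_{1}\right)=f\left(x_{2}\right)=0$, lies in $H^{2}\left(\mathbb{T}^{2}\right)$ if and only if $\left(I+U^{*}\right)v=0$. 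Therefore $E_{\lambda}\cap\mbox{Dom}\left(-\Delta_{U}\right)$ is exactly the set of $g=f+\left\langle v,T\mathbb{G}_{i}\right\rangle +\left\langle v,UT\mathbb{G}_{-i}\right\rangle \in E_{\lambda}$ with $f\left(x_{1}\right)=f\left(x_{2}\right)=0$ and $\left(I+U^{*}\right)v=0$.

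It then remains to run the case split. If $r=0$, then $U=-I$, $\mbox{Dom}\left(-\Delta_{-I}\right)=H^{2}\left(\mathbb{T}^{2}\right)\supseteq E_{\lambda}$, and the eigenspace is all of $E_{\lambda}$, of dimension $d=d-r$. If $r=2$, then $I+U^{*}$ is invertible, so $\left(I+U^{*}\right)v=0$ forces $v=0$ and the eigenspace is $\left\{f\in E_{\lambda}:f\left(x_{1}\right)=f\left(x_{2}\right)=0\right\}$, of dimension $d-2$ by Lemma \ref{lem:OldEigenspaceDim}. If $r=1$, the solutions of $\left(I+U^{*}\right)v=0$ are the multiples $v=cv_{0}$; since $U^{*}v_{0}=-v_{0}$ we have $\left(I-U^{*}\right)v_{0}=2v_{0}$, so using the rewriting of (\ref{eq:NewEigenfunctions}) in terms of $\mbox{Re}\mathbb{G}_{i}$ and $\mbox{Im}\mathbb{G}_{i}$ the eigenspace becomes $\left\{g\in E_{\lambda}:g=f+c\left\langle T^{*}v_{0},\mbox{Im}\mathbb{G}_{i}\left(x\right)\right\rangle ,\,f\left(x_{1}\right)=f\left(x_{2}\right)=0,\,c\in\mathbb{C}\right\}$, of dimension $d-1$ by Lemma \ref{lem:OldEigenspacesGi}. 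In all three cases the dimension is $d-r$, which is the assertion.

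I do not expect a genuine obstacle here: the content requiring the Diophantine hypothesis on $x_{0}/\pi$ has already been isolated in Lemmas \ref{lem:OldEigenspaceDim}--\ref{lem:OldEigenspacesGi}, and what remains is the bookkeeping above. The two points that deserve care are the first step — checking that an element of $E_{\lambda}\cap\mbox{Dom}\left(-\Delta_{U}\right)$ really is an eigenfunction of $-\Delta_{U}$, not merely a domain element — and, in the case $r=1$, tracking $v_{0}$ correctly through the identity $\left(I-U^{*}\right)v_{0}=2v_{0}$ so that the resulting subspace coincides with the one appearing in Lemma \ref{lem:OldEigenspacesGi}.
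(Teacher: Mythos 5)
Your proof is correct and follows essentially the same route as the paper: use Proposition \ref{prop:OldEigenspaces} to place the eigenfunctions in $E_{\lambda}$, deduce $\left(I+U^{*}\right)v=0$ from membership in $H^{2}\left(\mathbb{T}^{2}\right)$, and then split on $\mbox{rank}\left(I+U\right)$ to invoke Lemmas \ref{lem:OldEigenspaceDim} and \ref{lem:OldEigenspacesGi}. The only difference is that you make explicit the converse containment (that every element of $E_{\lambda}\cap\mbox{Dom}\left(-\Delta_{U}\right)$ is genuinely an eigenfunction of $-\Delta_{U}$, via $-\Delta_{U}\subseteq-\Delta_{0}^{*}$), a point the paper leaves implicit.
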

\begin{proof}
If $\mbox{rank}\left(I+U\right)=2$ and $g$ is an eigenvalue of $-\Delta_{U}$
with $0\ne\lambda\in\sigma\left(-\Delta\right)$, then from the proof
of Lemma \ref{prop:OldEigenspaces}, we have
\[
g\left(x\right)=f\left(x\right)+\left\langle T^{*}v,\mathbb{G}_{i}\left(x\right)\right\rangle +\left\langle \left(UT\right)^{*}v,\mathbb{G}_{-i}\left(x\right)\right\rangle 
\]
with $f\left(x_{1}\right)=f\left(x_{2}\right)=0$ and $\left(I+U^{*}\right)v=0$,
and therefore $v=0$, so $g\in H^{2}\left(\mathbb{T}^{2}\right)$
with $g\left(x_{1}\right)=g\left(x_{2}\right)=0$, and the statement
follows from Lemma \ref{lem:OldEigenspaceDim} .

If $\mbox{rank}\left(I+U\right)=1$, then there exist $0\ne v_{0}\in\mathbb{C}^{2}$,
so that
\[
g\left(x\right)=f\left(x\right)+c\left\langle T^{*}v_{0},\mbox{Im}\mathbb{G}_{i}\left(x\right)\right\rangle 
\]
with $f\left(x_{1}\right)=f\left(x_{2}\right)=0$, $c\in\mathbb{C}$.
Thus, in this case the statement follows from Lemma \ref{lem:OldEigenspacesGi}
.\end{proof}


\begin{thebibliography}{10}
\bibitem{BirmanSolomjak} M. S. Birman and M. Z. Solomjak, \emph{Spectral
Theory of Self-Adjoint Operators in Hilbert Space Mathematics and
its Applications (Soviet Series)}. D. Reidel Publishing Co., Dordrecht
1987.

\bibitem{CdV1}Y. Colin de Verdière, Pseudo-laplaciens. I. \emph{Annales
de l'institut Fourier. }(3) \textbf{32} (1982), 275-286.

\bibitem{CdV2}Y. Colin de Verdière, Ergodicité et fonctions propres
du laplacien. \emph{Comm. Math. Phys.} (3) \textbf{102} (1985), 497-502.

\bibitem{Huxley}M. N. Huxley, Exponential sums and lattice points
III. \emph{Proc. London Math. Soc.} (3) \textbf{87} (2003), 591-609.

\bibitem{KurlbergRosenzweig}P. Kurlberg and L. Rosenzweig, Scarred
eigenstates for arithmetic toral point scatterers. To appear in \emph{Comm.
Math. Phys}.

\bibitem{KurlbergUeberschaer}P. Kurlberg and H. Ueberschär, Quantum
ergodicity for point scatterers on arithmetic tori. \emph{Geom. Funct.
Anal.} (5) \textbf{24} (2014), 1565\textendash 1590. 

\bibitem{KurlbergUeberschaer2}P. Kurlberg and H. Ueberschär, Superscars
in the Šeba billiard. To appear in \emph{J. Eur. Math. Soc}.

\bibitem{Landau}E. Landau, Über die Einteilung der positiven ganzen Zahlen in vier Klassen nach der Mindeszahl der zu ihrer additiven Zusammensetzung erforderlichen Quadrate. \emph{Arch. Math. Phys.} (3) \textbf{13} (1908), 305\textendash 312. 

\bibitem{RudnickUeberschaer}Z. Rudnick and H. Ueberschär, Statistics
of wave functions for a point scatterer on the torus. \emph{Comm.
Math. Phys}. (3) \textbf{316} (2012), 763\textendash 782. 

\bibitem{Seba}P. Šeba, Wave chaos in singular quantum billiard. \emph{Phys.
Rev. Let}. (16) \textbf{64} (1990), 1855-1858.

\bibitem{Shigehara}T. Shigehara, Conditions for the appearance of
wave chaos in quantum singular systems with a pointlike scatterer.
\emph{Phys. Rev. E}. \textbf{50} (1994), 4357\textendash 4370.

\bibitem{Shnirelman}A. I. Shnirel'man, Ergodic properties of eigenfunctions,
\emph{Uspekhi Mat. Nauk}. 6(180) \textbf{29} (1974), 181\textendash 182.

\bibitem{Ueberschaer}H. Ueberschär, Quantum chaos for point scatterers
on flat tori. \emph{Philos. Trans. R. Soc. Lond. Ser. A}. \textbf{372}
(2014), 20120509. 

\bibitem{Ueberschaer2}H. Ueberschär, On the spectral geometry of
tori with random impurities\textquotedbl{}. arXiv:1509.04466 {[}math-ph{]}.

\bibitem{Ueberschaer3}H. Ueberschär, Uniformly distributed eigenfunctions
on flat tori with random impurities. arXiv:1502.05010 {[}math-ph{]}.

\bibitem{Ueberschaer4}H. Ueberschär, Delocalization for random displacement
models with Dirac masses. arXiv:1604.01230 {[}math-ph{]}.

\bibitem{Yesha}N. Yesha, Eigenfunction statistics for a point scatterer
on a three-dimensional torus. \emph{Ann. Henri Poincaré.} (7) \textbf{14}
(2013), 1801\textendash 1836.

\bibitem{Yesha2}N. Yesha, Quantum ergodicity for a point scatterer
on the three-dimensional torus. \emph{Ann. Henri Poincaré}. (1) \textbf{16}
(2015), 1\textendash 14. 

\bibitem{Zelditch}S. Zelditch, Uniform distribution of eigenfunctions
on compact hyperbolic surfaces. \emph{Duke Math. J}. (4) \textbf{55}
(1987), 919-941.\end{thebibliography}
\end{document}